\definecolor{webgreen}{rgb}{0,0.4,0}
\definecolor{webbrown}{rgb}{0.6,0,0}
\definecolor{purple}{rgb}{0.5,0,0.25}
\definecolor{darkblue}{rgb}{0,0,0.7}
\definecolor{darkred}{rgb}{0.7,0,0}
\definecolor{darkgreen}{rgb}{0,0.7,0}
\newcommand{\ignore}[1]{}
\newtheorem{lemma}{{\bf Lemma}}
\newtheorem{proposition}{{\bf Proposition}}
\newtheorem{corollary}{{\bf Corollary}}
\newtheorem{theorem}{{\bf Theorem}}
\newtheorem{defn}{{\bf Definition}}
\def \tc{\overline{\succ}_R}
\def \st{^{\star}}
\def\uv2{\underline{v}_2}
\def\ov2{\overline{v}_2}
\newcommand{\rarely}[1]{}
\begin{document}

	\allowdisplaybreaks

	\begin{titlepage}
		\title{\textbf{Choice by Rejection}\thanks{We are extremely grateful to Sean Horan for his detailed comments. This paper has benefited from discussions with Arunava Sen, Debasis Mishra, Rohan Dutta, Saptarshi Mukherjee and Abhinash Borah. We would also like to thank seminar participants at ACM EC 2021, DSE Winter school 2020 and  ISI-ISER Young Economists Workshop 2020 .}}
		\author{Bhavook Bhardwaj \thanks{Indian Statistical Institute, New Delhi ({\tt bhavook17r@isid.ac.in}).} \and Kriti Manocha \thanks{Indian Statistical Institute, New Delhi ({\tt kritim17r@isid.ac.in}).}}
		\maketitle

		\begin{abstract}
			We propose a boundedly rational model of choice where agents eliminate dominated alternatives using a transitive rationale before making a choice using a complete rationale. This model is related to the seminal two-stage model of \cite{manzini2007sequentially}, the Rational Shortlist Method (RSM). We analyze the model through \textit{reversals} in choice and provide its behavioral characterization. The procedure satisfies a weaker version of the \textit{Weak Axiom of Revealed Preference} (WARP) allowing for at most two reversals in choice in terms of set inclusion for any pair of alternatives. We show that the underlying rationales can be identified from the observable reversals in the choice. We also characterize a variant of this model in which both the rationales are transitive

			\bigskip
				\noindent
			Keywords: Bounded Rationality, Two-stage Choice, Revealed Preference, Choice Reversals 
			
			\noindent
			JEL Classification number: D01, D91 \\

		\end{abstract}
		\thispagestyle{empty}
	\end{titlepage}

	\section{Introduction}
	
	In the last two decades, various two-stage choice procedures have been proposed to rationalize systematic violations of the standard notion of rationality. In this paper we consider a new two-stage procedure of decision making in which a decision maker (DM) first shortlists a set of alternatives by \textit{rejecting} the set of \textit{minimal} alternatives with respect to the (first) rationale\footnote{We define rationale as a binary asymmetric relation}. By minimal, we mean an alternative which is dominated by some other alternative and does not dominate any other alternative. In the second stage, she chooses the maximum\footnote{Maximum alternative is the one which is not dominated by any other alternative with respect to the relation involved} alternative from the shortlisted set with respect to the (second) rationale.
	
	Rejecting the ``worst" alternatives before choosing is a natural way of making choices. Stochastic models of \cite{tversky1972choice}, \cite{dutta2020gradual} and deterministic models of
	\cite{apesteguia2013choice} and \cite{masatlioglu2007theory} discuss procedures of eliminating alternatives before making the choice. Such choice behaviors are often observed in real life too.
	
	\textbf{Example 1.} The editor of an Economics journal receives paper submissions and has the option of desk-rejecting before sending them to reviewers. Due to a large number of submissions, her rejection is based on the abstracts and she wishes to shortlist all \textit{reasonable} papers for a detailed review by the referees. It is natural to assume that her ranking over the papers might be incomplete. In order not to reject a possibly good quality paper, she chooses to eliminate only the set of \textit{minimal} papers before forwarding them to reviewers.

	\textbf{Example 2.} Economics department of a university is hiring for a faculty position. The selection committee has four applicants $x,y,z \text{ and } w$ to choose from. The applications are shortlisted for the interview on the basis of published work. The publications may not be comparable across sub-fields. Those applicants with no publication or with publication in \textit{lower valued} journals are \textit{rejected}. In the second round, the best candidate is chosen using an overall ranking (considering teaching experience, interview, conference presentations etc). While considering $x$ and $y$, $x$ is selected as a better candidate. When $z$ is also considered, the committee goes for $y$. If all the four candidates are compared, $x$ is selected.

	\begin{figure}[h]
		\centering
		\includegraphics[scale=0.28]{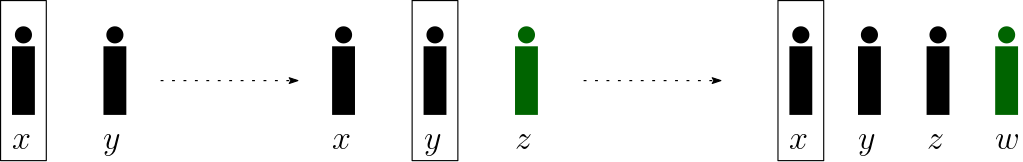}
	\end{figure}
	
	There is a large literature on two-stage choice procedures, well known as \textit{shortlisting procedures} (\cite{tyson2013behavioral}). These procedures rationalize boundedly rational behavior in different environments. The Rational Shortlist Method (RSM) (\cite{manzini2007sequentially}), Categorize then Choose (CTC) (\cite{manzini2012categorize}) and many related shortlisting procedures satisfy a weaker form of the \textit{Weak axiom of revealed preference} (WARP)\footnote{ \cite{samuelson1938note} showed that WARP alone behaviorally characterizes a choice function that is generated by maximizing by an underlying preference relation. It requires that for a pair of alternatives $x,y$, if $x$ is chosen in the presence of $y$, then $y$ cannot be chosen in the presence of $x$}. It requires that if an alternative $x$ is chosen in binary comparison with $y$, as well
	as in a set $S$ containing both $x$ and $y$, then $y$ should not be chosen in any ``intermediate'' set $S'$ between $\{x,y\}$ and $S$. Effectively, it allows for at most one reversal in terms of set inclusion for any pair of alternatives. Clearly, these models cannot explain scenarios like the ones described above where there is a reversal in choice from $x$ to $y$ and then $x$ again.
	
	We can rationalize this behavior using our model. For instance, publications of $y$ are in a different sub-field than those of $x,z \text{ and } w$  which are comparable with $z$ being the best and $w$ being the worst. Therefore, when only $x$ and $y$ are considered, both are shortlisted. If $z$ is also considered, $x$ is rejected on the basis of lower valued publications. If all the four are considered, $w$ being lowest ranked on the basis of publications, gets rejected. The overall ranking of the candidates, $x \succ y \succ z \succ w$, then rationalizes the final choices. Our model thus allows for a \textit{double reversal} in terms of set inclusion for a pair of alternatives. 
	
	In this paper, we formalize and analyze the model described above, called \textbf{Choice by Rejection (CBR)} . First, we axiomatically characterize the model where the first rationale is transitive and the second rationale is complete. Our analysis makes use of two types of choice reversals which we term as \textit{weak} and \textit{strong} reversals. The main axiom in our characterization restricts certain choice reversals. Second,  we show that first and second rationales which represent the data can be identified from the reversals. We use a \textit{small menu} property displayed by the choice function in identifying the class of \textit{CBR-representable} rationales. This property allows us to focus only on menus of pairs and triples. Third, we characterize a variant of CBR where the second rationale is restricted to be a linear order. Finally, we provide some results that relate CBR with existing shortlisting procedures like the RSM.

	\subsection{Related Literature} \label{lit}

	 The notion of ``eliminating" and choosing has been discussed in the literature. \cite{tversky1972choice} proposed a stochastic model where choice is analyzed as a probabilistic process of successive eliminations. In deterministic setting, \cite{apesteguia2013choice} proposed a procedure that involves sequential pairwise elimination of ``disliked" alternatives until only one alternative remains. \cite{masatlioglu2007theory} introduced a model of elimination wherein the DM eliminates those alternatives which are dominated by some ``comparable" alternative. Those  alternatives that cannot be eliminated by any of its comparables end up being chosen. Observe that the DM ends up choosing the \textit{maximal} set of comparable alternatives. On the other hand, in the two-stage models like RSM (\cite{manzini2007sequentially}), the \textit{maximal} set is shortlisted in the first stage. This can be understood as rejecting those alternatives which are dominated by some alternative. However, this entails large dependence on the first rationale since the second rationale is only used to choose one alternative among the small set of shortlisted maximal alternatives. This paper proposes a weaker form of domination in shortlisting where the second rationale has more deciding power. Note that choosing the maximal alternatives is equivalent to successively rejecting the set of \textit{minimal} alternatives i.e. those alternatives which are dominated by some alternative and do not dominate any other alternative. Successive elimination by the DM however can increase the cognitive load of shortlisting. It has been shown that often individuals deploy heuristics while making complex choices (\cite{gigerenzer1999fast}). We look at a simple heuristic instead in which DM rejects ``disliked" alternatives just once\footnote{If the ``rejection" is successive, in the limit, CBR is equivalent to the T$_1$SM model of \cite{matsuki2018choice}, a variant of RSM where the first rationale is transitive.}. A recent paper that is related to our model is by \cite{RePEc:ash:wpaper:29}. In their choice procedure, the DM shortlists alternatives by rejecting the worst alternative using a preference order. A detailed comparison with their work is done in section \ref{sec7}.

	The literature on boundedly rational choice procedures involves weakening of the standard notion of rationality i.e. WARP.
	One of the most well-known weakening of WARP is Weak-WARP (WWARP), first introduced in \cite{manzini2007sequentially} \footnote{Some of the models which directly use WWARP to characterize their models are \cite{manzini2007sequentially}, \cite{manzini2012categorize}, \cite{LOMBARDI200958}, \cite{cherepanov2013rationalization}, \cite{ehlers2008weakened} }.  We introduce a novel weakening of WWARP called R-WARP* which relates the two conditions using choice reversals. In terms of the number of reversals, it is well known that WWARP allows for at most one reversal between a pair of alternatives. R-WARP* extends this to at most two reversals which we call a \textit{double reversal}. Such behavior has been observed in different experimental settings (see \cite{manzini2010revealed}, \cite{teppan2009minimization}). The literature has attributed a single reversal to two well known effects called the \textit{compromise effect} and the \textit{attraction effect} which we will discuss later in the paper.  Similarly, \textit{two-compromise effect} and \textit{two-decoy effect} are observed as double reversal in choices (see \cite{tserenjigmid2019choosing}). Our paper gives a choice theoretic understanding of these effects. To analyze our model, we follow a technique similar to the one discussed in \cite{horan2016simple}. This involves viewing violations of rationality as choice reversals between pairs of alternatives. Our analysis relies primarily on two types of reversals permitted in this model which we term \textit{weak} and \textit{strong reversals} (discussed later). 
	
	The layout of the paper is as follows: Section \ref{sec2} discusses the model. Section \ref{sec4} provides axiomatic foundations of our model. Section \ref{sec5} discusses a variant of the model, \textit{Transitive}-CBR.  Section \ref{axiomsdiscussion} discusses choice reversals and their behavioral interpretations. Section \ref{Iden} provides results on identification of the model. Section \ref{sec7} provides some results relating our model to the literature and Section \ref{sec8} concludes.

		\section{Preliminaries} \label{sec2}

	Let $X$ be a finite set of alternatives and $\mathcal{P}(X)$ be the set of all non-empty subsets of $X$. The function $C:\mathcal{P}(X) \rightarrow X$ is a choice function that gives for any menu $S$ \footnote{$S \subset X \setminus$ $\phi$}, a unique alternative from $S$, i.e. $C(S) \in S$ and $|C(S)|=1$. Let $(R,P)$ denote a pair of rationales\footnote{$R \subset X \times X$ and $P \subset X \times X$} where $R$ is transitive and $P$ is complete. We define the set of \textit{minimal} alternatives with respect to $R$ from a menu $S$  as 
	\vspace{-0.5cm}
	$$ \min(S,R)=\{x \in S \ :\ \exists \ z \in S \ \text{s.t.} \ zRx \ \text{and} \ \nexists \ z' \in S \ \text{s.t.} \ xRz'\}$$
	Thus, an alternative is not minimal in a menu $S$ if and only if either (i) it is ``isolated" (not related to any other alternative with respect to $R$) or; (ii) there is at least one alternative which is ``dominated" by it with respect to $R$. 
	The idea of shortlisting in this paper relies on a one shot elimination of minimal alternatives before making the final choice as against shortlisting by selection of \textit{maximal} \footnote{Formally, the set of \textit{maximal} alternatives of choice problem is defined as $ \max(S,R)=\{y \in S | \ \nexists x \in S \ s.t \ xRy \}$} alternatives. 
	In our choice procedure, the DM first eliminates minimal alternatives using a selection criterion $R$ (first stage shortlisting) and then makes unique a choice from $S\setminus \min(S,R)$ by choosing the maximal alternative of the rationale $P$.  
\medskip
\begin{figure}[h]
\centering
\includegraphics[scale=0.8]{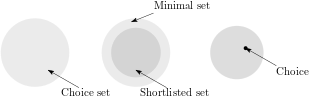}
\end{figure}
	
	\begin{defn}
		A choice function $C$ is \textbf{Choice by Rejection (CBR)} representable whenever there exists a pair of $(R,P)$, adsymmetric rationales with $R$ transitive (possibly incomplete) and $P$ complete such that 
		$$C(S)=\max(S \setminus \min(S,R),P)$$
\end{defn}
	
	Note that for any rationale $R$ on a set $S$, $ \max(S,R) \subseteq S \setminus \min(S,R)$. It indicates that for a given selection criterion, number of alternatives shortlisted in the first stage in CBR are at least as much as the number of alternatives shortlisted in RSM.\footnote{A choice function $C$ is RSM representable if it can be rationalized by an ordered pair of rationales $(P_1,P_2)$ such that $C(S) = \max (\max (S,P_1), P_2)$ }\\

\section{Behavioral Characterization} \label{sec4}

\subsection{Strong and Weak reversals}

 Observable choice reversals provide a succinct framework for analysis of boundedly rational models of choice. The characterization of the RSM model and the Transitive Shortlist Method (TSM)\footnote{TSM is a special case of the RSM model where both the rationales are transitive} by \cite{horan2016simple} is an important one in this regard. His characterization uses an interesting and easy to check consistency condition which is expressed using different types of choice reversals. In a similar manner, we categorize inconsistencies in choices in terms of choice reversals. We define two mutually exclusive reversals that help analyze our model and provide basis for our characterization. 

Consider three alternatives $x,y,z$ and a menu $S$ such that $ \{x,y\} \subseteq S$ and $z \notin S$. We say that the choice function $C$ displays an $(xy)$ reversal \textit{due} to $z$ if we observe the following choices
$$C(xy)=C(S)=x, \ \ C(S\cup \{z\})=y$$
Such $(xy)$ reversals can be categorized as \textit{weak} or \textit{strong} depending on whether reversal is due to an alternative which is either pairwise \textit{dominated} or \textit{dominates} $x$. We call the first one a \textbf{weak} $(xy)$ reversal where $x \succ_c z$. This reversal is a weak reversal (due to $z$) in the sense that the introduction of an apparently ``weak" alternative ($z$) shifts the choice from $x$ to $y$. The second type of reversal is called a \textbf{strong} $(xy)$ reversal where $z \succ_c x$. This reversal is a strong reversal (due to $z$) as the introduction of an apparently ``strong" alternative shifts the choice from $x$ to $y$. By definition, if $(xy)$ has a weak(strong) reversal due to $z$, then $(xy)$ cannot have a strong(weak) reversal due to $z$.\footnote {\cite{horan2016simple} describes Weak and Direct reversals in a similar spirit.   A choice function $C$ displays a Weak $(xy)$  reversal on $B \supset \{x,y\}$ if $C(xy)= x$ and $C(B) \neq C(B\setminus\{y\})$. $C$ displays a direct $(xy)$  reversal on $B \subseteq X \setminus\{x\}$ if $C(B)= y$ and $C(B \cup \{x\}) \notin \{x,y\}$.} We say that there is a reversal \textit{in the presence of} $x$ if it is already present in a menu on which a reversal happens. Formally, $C(S) = y,\  C(S \cup z) =w$ for some $S \ni x$. 

As it turns out, these reversals can provide us information about the first stage rationale. Intuitively, a reversal can occur when an alternative is in the minimal set for a given menu and upon addition of another alternative, it is ``pulled" out of the minimal set. Alternatively, an alternative can be ``pushed" into the minimal set upon addition of a new alternative. 

In order to capture all the information revealed by reversals, we define a relation $\succ_R$ on $X$ such that $x\succ_R y$ if and only if there is a:
\vspace{-0.2cm}
\begin{itemize}
	\setlength \itemsep{0.006cm}
	\item \textbf{weak} $(xy)$ reversal due to $w$ for some $w \in X$ or;
	\item \textbf{weak} $(wx)$ reversal due to $y$ for some $w \in X$ or;
	\item \textbf{strong} $(yw)$ reversal due to $x$ for some $w \in X$
\end{itemize}

It can be noted that by our definition of weak and strong reversals, $x\succ_R y$ would imply $x \succ_c y$, hence this relation is asymmetric. Also, $\succ_R$ may not be complete. The following example illustrates instances of weak and strong reversals and the resulting $\succ_R$.
\medskip

\textbf{Example 3:} Let $X = \{x,y,z, w\}$ and the choice function is as follows: 
{\small
	\begin{center}
		\begin{tabular}{|c|c|c|c|c|c|} 
			\hline 	
			{\textbf{S}} & {\textbf{C(S)}} & {\textbf{S}} & {\textbf{C(S)}} & {\textbf{S}} & {\textbf{C(S)}} \\
			\hline 
			$\{x,y\}$ & $x$ & $\{x,y,z\}$ & $y$ & $\{x,y,z,w\}$ & $x$ \\
			$\{x,z\}$ & $z$ & $\{x,y,w\}$ & $x$ &  & \\
			$\{x,w\}$ & $x$ &  $\{x,z,w\}$ & $x$ &  &\\
			$\{y,z\}$ & $y$ & $\{y,z,w\}$ & $y$ & & \\
			$\{y,w\}$ & $y$ &  &  & & \\
			$\{z,w\}$ & $z$ &  &  & & \\
			\hline 
		\end{tabular}
	\end{center} 
}
\medskip
It can be seen that in the choice function above, we have a
(i) \textbf{strong} $(xy)$ reversal due to $z$, and 
(ii) \textbf{weak} $(zx)$ reversal due to $w$. This generates the following $\succ_R$ 

\begin{figure}[h] 
	\centering
	\includegraphics[scale=0.35]{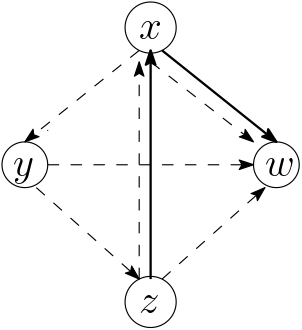}
	\caption{Dashed arrow indicates $\succ_c$ and solid arrow indicates $\succ_R$ } \label{psi}				
\end{figure}

An implication of CBR is that these reversals imply reversals on \textit{small} menus-- menus of size 2 and 3 -- as well, a result which we will prove later. This permits us to define $\succ_R$ solely based on choices from small menus. This is discussed in detail in section \ref{Iden}. Now, we are equipped to introduce the behavioral axioms.

\subsection{Axioms}
We now provide conditions on the choices of DM which guarantee that these choices are the result of DM choosing according to CBR. Our model is characterized by four behavioral properties (axioms) stated below. The first axiom is called \textit{Never Chosen}. It is a mild consistency condition first introduced in \cite{RePEc:ash:wpaper:29} and is related to the \textit{Always Chosen} property discussed in \cite{manzini2007sequentially}.\footnote{\textit{Always Chosen} is an intuitive property which says that if an alternative is chosen in pairwise comparisons with all alternatives of a menu, then it must be chosen from that menu} It says that for any menu, if an alternative is never chosen in a pairwise comparison with alternatives of that menu, then that alternative cannot be chosen in that menu. Formally, we define it as\\

\textbf{(A1)} \textbf{Never Chosen (NC)}: For all $S \in \mathcal{P}(X)$ and any $x \in S$,
\begin{center}
	$\forall \ y \in S \setminus \{x\}, \ C(xy)\neq x   \implies C(S) \neq x$
\end{center}
\medskip
Our second axiom is a novel weakening of the weak contraction consistency (WCC) axiom introduced in \cite{ehlers2008weakened}.\footnote{WCC states that if $C(S)=x$, then $C(S \setminus \{y\})=x$ for some $y \in S \setminus \{x\}$.} \\

\textbf{(A2)} \textbf{Weak Contraction Consistency$^*$ (WCC$^*$)}:
For any menu $S \supset \{x,y\}$ $$\text{If }C(S) \in \{x,y\}\text{, then there exists}\ z \ \in S \setminus \{x,y\} \text{ such that } C(S\setminus \{z\}) \in \{x,y\}$$
Intuitively, WCC ensures a ``path" of choices from  $\{x,y\}$ to the menu $S$ where either $x$ or $y$ is chosen. An interesting implication is that if there is a reversal from $x$ to $y$, there exists at least one intermediate set where addition of an alternative leads to the switch (see Figure \ref{WCC}). 
\begin{figure}[h]
	\centering
	\includegraphics[scale=0.45]{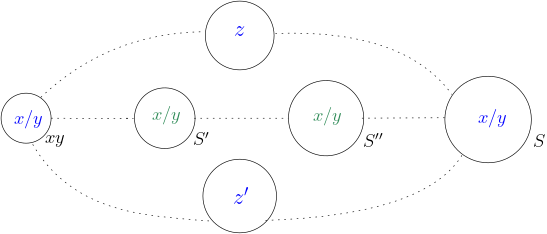}
	\caption{Existence of a path from $\{x,y\}$ to $S$ with choices belonging to $\{x,y\}$} \label{WCC}				
\end{figure}

In the spirit of the well known \textit{No Binary Cycles} (NBC) condition in the literature that restricts $\succ_c$ relation to be transitive, our next axiom prohibits cycles between only the pairs of alternatives related via $\succ_R$. Therefore, this condition can be viewed as a weaker form of NBC\\

\textbf{(A3)} \textbf{No binary cycles$^*$ (NBC$^*$)}:  For all $x_1, \ldots, x_n \in X$, $$x_1 \succ_R x_2, \ x_2 \succ_R x_3, \ldots , \ x_{n-1} \succ_R x_n \implies x_1 \succ_c x_n $$

Our last axiom is the classic congruence condition required in any shortlisting procedure as discussed in \cite{tyson2013behavioral}. Intuitively, it requires that if an alternative $x$ is chosen in the presence of another alternative $y$ where $y$ is not ``dominated", then $y$ cannot be chosen in presence of $x$ whenever $x$ is not ``dominated". This domination can be captured using $\tc$ relation (We denote transitive closure of $\succ_R$ as $\tc$). We call this condition Reject-WARP (R-WARP).\\

\textbf{(A4)} \textbf{Reject-WARP (R-WARP)}: For any alternatives $x$ and $y$  and menus $S$ $S'$  such that  $\{x,y\} \subseteq S , S'$ 
\begin{center}
	$\text{If } y \notin \min(S, \tc),\ \text{and} \  C(S)=x ,\ \text{then} \ x \notin \min(S', \tc) \implies C(S') \neq y$
\end{center}

Our model can be behaviorally characterized using the above discussed axioms. The main result of our paper is as follows
\begin{theorem} \label{thm1}
	A choice function $C$ is CBR representable if and only if it satisfies (A1)-(A4)
\end{theorem}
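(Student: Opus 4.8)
The plan is to prove Theorem \ref{thm1} in two directions: necessity (CBR $\Rightarrow$ A1--A4), which should be the routine direction, and sufficiency (A1--A4 $\Rightarrow$ CBR), which is where the real work lies. I would organize the whole argument around the relation $\succ_R$ and its transitive closure $\tc$, since the axioms are stated in terms of these and the construction of the representing rationales must recover $R$ from the observed reversals.

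For necessity, I would first fix a CBR representation $(R,P)$ and verify each axiom by tracing how $\min(S,R)$ behaves. Never Chosen (A1) follows because $C(S)=\max(S\setminus\min(S,R),P)$ is a $P$-maximal element, and an alternative that loses every pairwise comparison is $P$-dominated (or minimal) and hence cannot survive both stages. For WCC$^*$ (A2), if $C(S)\in\{x,y\}$ then the chosen element survives; I would argue that one can delete some $z\notin\{x,y\}$ without pushing the chosen element into the minimal set or exposing a better $P$-element, exploiting that removing a single alternative changes $\min(\cdot,R)$ only locally (transitivity of $R$ is what keeps this controlled). The key intermediate lemma, which I would state and prove first, is that each weak/strong reversal generating $x\succ_R y$ forces a specific $R$-relationship, so that $\succ_R\subseteq R$ (up to the relevant direction); NBC$^*$ (A3) then follows from transitivity of $R$ plus asymmetry, and R-WARP (A4) follows because $\min(S,\tc)\supseteq\min(S,R)$-type containments let the two-stage maximization behave WARP-like on the non-minimal survivors.

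For sufficiency I would construct the representation explicitly. The natural candidate is to set $R$ equal to $\tc$ (the transitive closure of $\succ_R$, which is transitive by construction and asymmetric by NBC$^*$), and to build $P$ from the pairwise choices $\succ_c$, completing it to a complete asymmetric relation in a way consistent with the reversals. I would then have to show $C(S)=\max(S\setminus\min(S,\tc),P)$ for every menu $S$. The heart of the verification is to prove that the chosen alternative $C(S)$ always survives the first-stage rejection (i.e. $C(S)\notin\min(S,\tc)$), using A1 and A4 to rule out the chosen element being minimal, and that among the survivors it is the $P$-maximum, using R-WARP to guarantee consistency of the pairwise comparisons restricted to $S\setminus\min(S,\tc)$. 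WCC$^*$ would be used to induct on menu size, propagating the representation from small menus upward along the ``path'' of choices it guarantees.

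\textbf{The main obstacle} will be the sufficiency direction, and within it the simultaneous control of both stages: I must define $P$ so that it is complete and correctly selects $C(S)$ among survivors, while ensuring the $R=\tc$ construction rejects exactly the right alternatives so that $C(S)$ is never rejected but every genuinely dominated alternative is. The delicate point is that $\succ_R$ is recovered only from reversals, so I expect to need the (forward-referenced) \emph{small menu} property to reduce verification on arbitrary $S$ to menus of size $2$ and $3$, guaranteeing that all relevant reversals are already witnessed on small menus and that no ``hidden'' reversal on a large menu contradicts the constructed $(R,P)$. Reconciling the completion of $P$ with R-WARP on every menu simultaneously --- rather than pair by pair --- is where I anticipate the argument to be most intricate, and I would handle it by an induction on $|S|$ that leans on WCC$^*$ to extend a representation valid on all proper subsets to $S$ itself.
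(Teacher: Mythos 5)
Your overall route is the same as the paper's: necessity via a lemma pinning down what each weak/strong reversal forces about $(R,P)$ (the paper's Lemma \ref{Rev}, giving $\succ_R\subseteq R$ and hence NBC$^*$), and sufficiency via the explicit construction $R=\tc$ together with a completed second rationale, verified with the help of the small-menu property and the path guaranteed by WCC$^*$. So the decomposition and the first-stage construction are exactly right.

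There is, however, one concrete step in your plan that would fail as stated: building $P$ ``from the pairwise choices $\succ_c$'' and then completing it. The defining feature of a weak $(xy)$ reversal is that the representation must have $xRy$ \emph{and} $yPx$ --- the second rationale has to \emph{reverse} the pairwise choice $x\succ_c y$, because in the binary menu $\{x,y\}$ the alternative $y$ is already eliminated at the first stage and so the pairwise choice carries no information about $P$ on that pair. Any $P\supseteq\succ_c$ therefore cannot represent a choice function exhibiting a weak reversal. The paper's fix is to define the core of $P$ not from $\succ_c$ but from revealed preference among first-stage survivors: $xP_1y$ iff $C(S)=x$ and $y\notin\min(S,\tc)$ for some $S$ (asymmetric by R-WARP), and then to complete it on the remaining pairs --- which are exactly those related by $R^c=\tc$ --- in the direction of $R^c$. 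You would also need the intermediate structural lemmas the paper proves (that a strong $(xy)$ reversal forces $x$ to be $\tc$-isolated on the relevant menu, and the resulting Exclusivity of weak versus strong reversals for a given pair); these, not an induction on $|S|$, are what make the verification that $C(S)\notin\min(S,\tc)$ go through.
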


\paragraph{Outline of the proof:} 
WCC$^*$ and R-WARP imply that any choice reversal will be associated with an alternative $z$ such that the reversal is due to $z$. Hence, any reversal will be either a weak or a strong reversal. The axioms imply an exclusivity property which restricts the choice function such that if it displays a weak(strong) reversal for a pair of alternatives, then it cannot display a strong(weak) reversal.
NC and NBC$^*$ further impose restriction on $\succ_R$ when the choice function displays a strong reversal for a pair of alternatives. A small menu property helps us view all the reversals displayed by the choice functions on small menus i.e. menus of size 2 and 3. This enables us to construct rationales for representation of the choice data. 

	\section{Transitive-CBR} \label{sec5}

In this section, we discuss a variant of our model in which we restrict the second rationale to be a preference order. We call this variant \textit{Transitive-CBR}. This model is related to \cite{RePEc:ash:wpaper:29} as it relaxes completeness of the first rationale from their model. It can be seen as a natural generalization of their dual self model. It may be argued that the \textit{`should'}- self interpretation of the first rationale can display instances of indecisiveness which is precisely reflected by dropping their assumption of completeness. 

We can characterize this model by generalizing R-WARP to \textit{R-SARP} which is defined as: \\	
\textbf{(A4')} \textit{R-SARP}: 
For all $S_{1}, \ldots, S_{n} \in \mathcal{P}(X)$ and distinct $x_{1}, \ldots, x_{n} \in X:$
$$
\text{If} \ 	x_{i+1} \notin \min(S_i, \tc), \ C(S_{i})=x_{i} \text{ \ for \ }  i=1, \ldots, n-1, \text { then \ }$$ $$x_{1} \notin \min(S_n, \tc) \implies C\left(S_{n}\right) \neq x_{n}
$$
$\text{If } y \notin \min(S, \succ_R),\ \text{and} \  C(S)=x ,\ \text{then} \ x \notin \min(S', \succ_R) \Rightarrow C(S') \neq y$
It turns out that a characterization of \textit{Transitive-CBR} requires no more than this generalization of R-WARP to any arbitrary chain of alternatives. The characterization is then given by the following result

\begin{theorem}
	A choice function $C$ is a Transitive-CBR representable if and only it satisfies (A1)-A(3) and (A4') 
\end{theorem}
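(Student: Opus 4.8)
The plan is to reduce both directions to Theorem~\ref{thm1} and to isolate the only genuinely new content, which is that replacing R-WARP by the chain condition R-SARP forces the second rationale to be transitive, i.e.\ a preference order. Throughout I would work with the first rationale in the canonical form $R=\tc$ delivered by the construction in the proof of Theorem~\ref{thm1}, so that ``surviving the first stage in $S$'' means exactly $x\notin\min(S,\tc)$. Since (A1)--(A3) are common to both characterizations, the entire burden is to trade (A4) for (A4') and simultaneously upgrade $P$ from complete to complete-and-transitive.

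For sufficiency, I would first note that the $n=2$ instance of (A4') is precisely (A4), so a choice function satisfying (A1)--(A3) and (A4') satisfies (A1)--(A4); Theorem~\ref{thm1} then yields a CBR representation whose first rationale I take to be $\tc$. It remains to produce a transitive second rationale. I define a revealed second-stage relation $P^{*}$ by declaring $aP^{*}b$ whenever some menu $T$ has $a,b\notin\min(T,\tc)$ and $C(T)=a$, i.e.\ $a$ is chosen while both $a$ and $b$ survive the first stage. For every menu $S$ the chosen alternative $C(S)$ then stands in $P^{*}$ to every other survivor of $S$ (witnessed by $T=S$), so $C(S)=\max\bigl(S\setminus\min(S,\tc),P\bigr)$ for any asymmetric $P\supseteq P^{*}$. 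The crux is to show $P^{*}$ is acyclic: a cycle $a_{1}P^{*}a_{2}P^{*}\cdots P^{*}a_{m}P^{*}a_{1}$ yields, after fixing a witnessing menu $S_{i}$ for each link (indices taken cyclically), exactly the configuration whose conclusion R-SARP forbids, a contradiction. An acyclic relation on the finite set $X$ extends to a linear order $P$ by a topological sort (Szpilrajn's extension theorem); since $P\supseteq P^{*}$ and $C(S)$ already $P^{*}$-dominates every co-survivor, $(\tc,P)$ is a Transitive-CBR representation.

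For necessity, I would observe that a Transitive-CBR representation is in particular a CBR representation, so (A1)--(A3) follow from the necessity half of Theorem~\ref{thm1}. The condition (A4') is then obtained by the same argument that establishes the necessity of (A4) in Theorem~\ref{thm1}, but run along an arbitrary chain rather than a single pair: assuming $C(S_{i})=x_{i}$ and $x_{i+1}\notin\min(S_{i},\tc)$ for $i=1,\dots,n-1$, the survival mechanism tying $\tc$ to the first stage gives $x_{i}Px_{i+1}$ for each $i$, and transitivity of the preference order $P$ collapses this to $x_{1}Px_{n}$; since $x_{1}\notin\min(S_{n},\tc)$ makes $x_{1}$ a survivor of $S_{n}$, the $P$-maximal survivor of $S_{n}$ cannot be $x_{n}$, whence $C(S_{n})\neq x_{n}$. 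This is where transitivity of $P$ does its work, and it is the only place where (A4') is used beyond its $n=2$ instance.

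I expect the main obstacle to lie at the interface between the derived relation $\tc$ and the actual first-stage filter: one must pin down that $x\notin\min(S,\tc)$ coincides with surviving the first stage in the representation, which is precisely what makes the chain hypotheses of R-SARP translate faithfully into $P$-comparisons in both directions. This is the delicate identification-theoretic point, and I would discharge it by reusing the lemmas already built for Theorem~\ref{thm1} relating reversals, $\succ_R$, and survival. By comparison, deducing acyclicity of $P^{*}$ from R-SARP and checking that the linear extension disturbs no choice (it cannot, since the chosen element already dominates every co-survivor) are routine once that interface is secured.
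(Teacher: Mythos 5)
Your proposal is correct and follows the same overall decomposition as the paper: necessity is obtained exactly as in the paper (Proposition \ref{prop4}), by running the R-WARP necessity argument along the chain to get $x_iPx_{i+1}$ and invoking transitivity and asymmetry of $P$; and sufficiency reduces to the Theorem \ref{thm1} machinery, with the only new work being the transitivity of the second rationale. Where you genuinely diverge is in how that last step is handled. The paper constructs a specific relation $P^c=\bar{P_1}\cup\hat{P_2}$ with $\bar{P_1}$ the transitive closure of the revealed relation $P_1$ and $\hat{P_2}$ a remainder drawn from $R^c$, and then verifies completeness, asymmetry and transitivity of this particular object by a case analysis on $3$-cycles (distinguishing which links come from $\bar{P_1}$ and which from $\hat{P_2}$). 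You instead show directly that the revealed relation $P^{*}$ is acyclic -- a cycle of witnessing menus is precisely a configuration whose conclusion R-SARP forbids -- and then take an arbitrary Szpilrajn extension; since $C(S)$ already $P^{*}$-dominates every co-survivor of $S$, any such extension rationalizes the data. This buys a cleaner argument that sidesteps the mixed-case transitivity check and isolates exactly what R-SARP adds over R-WARP (namely, no cycles in the revealed second-stage relation), at the cost of a slightly less explicit representation. The one point you must (and do) flag is the interface claim that $C(S)\notin\min(S,\tc)$, which is needed both for the witnesses in your acyclicity argument and for the final rationalization; this is exactly the paper's use of WCC$^*$, R-SARP and Lemmas \ref{SR}--\ref{Red}, which carry over verbatim once one notes that the $n=2$ instance of (A4') is (A4).
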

The proof can be found in the Appendix.

\section{Discussion on Choice Reversals}\label{axiomsdiscussion}

Rational choice theory does not allow for reversals i.e. the choice of an alternative $x$ when $y$ is available in a menu and the choice of $y$ when $x$ is available in a different menu. The literature is replete with empirical evidence displaying such reversals. Two prominent behavioral explanations of such reversals have been the \textit{compromise effect}  and the \textit{attraction effect} which is also  popularly known as the \textit{decoy effect}. The compromise effect  first discussed in \cite{simonson1989choice}  says that individuals avoid ``extreme" alternatives and ``compromise" for non-extreme alternatives. The idea is that addition of an alternative to a menu makes the previously chosen alternative appear ``extreme". Hence the choice shifts to an alternative which was not previously chosen, causing a reversal. The attraction effect first discussed in \cite{huber1982adding} on the other hand says that the addition of an alternative to a menu acts as a ``decoy" for an alternative that was previously not chosen. For alternatives $x$, $y$ and $z$, both the effects would be reflected behaviorally as $$C(\{x,y\})= x \ \ \ \text{and} \ \ \ C(\{x,y,z\}) = y$$
with $z$ acting as alternative that makes $x$ appear ``extreme" in the compromise effect and $z$ acting as a ``decoy" for $y$ in the decoy effect.   
We extend the idea above to what we call a \textit{single reversal}. Denote by $\succ_c$ the pairwise relation such that $x \succ_c y $ if and only if $C(\{x,y\})=x$ (we will abuse notation and use $(xy)$ and $\{x,y\}$ interchangeably). We now define a \textit{single reversal} as

\begin{defn}{$(xy)$ \textit{single reversal}:}
	If $x \succ_c y$ and there exists $ S \supset \{x,y\}$ such that $C(S)=y$ then for $S' \supset S \supset \{x,y\}$, $C(S') \neq x$  
\end{defn}
\begin{figure}[h] 
	\centering
	\includegraphics[scale=0.6]{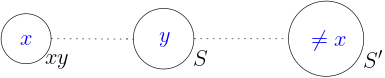}
	\caption{$(xy)$ single reversal} \label{WWARP}				
\end{figure}
The above definition permits for at most one reversal with respect to a pair $(xy)$ in terms of set inclusion. It is easy to see that if a choice function satisfies WARP, then for a pair of alternatives $(xy)$ , $x \succ_c y$ would imply that $y$ can never be chosen from any menu that contains $x$. Expressed in terms of reversals, WARP allows for no reversal in choices between $x$ and $y$ along any sequence of sets (containing $x$ and $y$) ordered by set inclusion. Whereas WWARP allows for only  \textit{single reversal} in choices. 

A natural implication of the compromise effect  and the decoy effect are what \cite{tserenjigmid2019choosing} calls the \textit{two-compromise effect}  and the \textit{two-decoy effect}. In the case of the two-compromise effect, the argument is that an addition of the fourth alternative $w$ to a menu would make $x$ no longer appear an ``extreme" alternative and the choice would revert to $x$. In case of the two-decoy effect, $w$ would act as a ``decoy" for $x$, nullifying the decoy effect of $z$ for $y$.  Again, both the effects would be reflected behaviorally as $$C(\{x,y\})= x \ \ \ \text{and} \ \ \ C(\{x,y,z\}) = y \ \ \ \text{and} \ \ \ C(\{x,y,z, w\}) = x$$

In a similar manner as a \textit{single reversal}, we extend the above idea to what we call a \textit{double reversal} defined as

\begin{defn}{	$(xy)$ \textit{double reversal}:}
	If $x \succ_c y$ and there exists $S' \supset  S \supset \{x,y\}$ such that $C(S)=y, \ C(S')=x$ then  for  $S'' \supset S' \supset \{x,y\} $, $C(S'') \neq y$
\end{defn}

\begin{figure}[h] 
	\centering
	\includegraphics[scale=0.6]{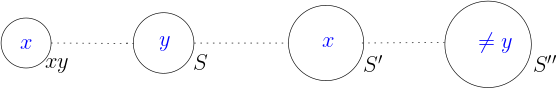}
	\caption{$(xy)$ double reversal} \label{R-WARP}				
\end{figure}

There is experimental evidence of double reversals (see \cite{tserenjigmid2019choosing}, \cite{manzini2010revealed} , \cite{teppan2009minimization}). We can see from example in the introduction that CBR allows for a double reversal and this is what differentiates CBR from other shortlisting models in the literature.\footnote{To the best of our knowledge, no shortlisting procedure disscussed in the literature allows for \textit{double reversals}}

Choice reversals provide a framework to relate our axioms to some well-known axioms in the literature. An interesting implication of R-WARP and WCC$^*$ is that for any pair $(xy)$, there can be no more than two reversals. So for a $(xy)$ reversal from $S$ to $S'$, we can identify a menu $T$ and alternative $z$, such that $S \subseteq T \subset S'$, $C(T)=x$ and $C(T \cup \{z\})=y$, and choice is $x$ for all sets in a ``path" between $S$ and $T$, and choice is $y$ in a ``path" between $T \cup \{z\}$ and $S'$. Similarly, for a \textit{double reversal}, we can identify two menus where addition of an alternative leads to a reversal in the ``path". Thus, an $(xy)$ double reversal in the choice is associated with two alternatives $z_1$ and $z_2$ due to which the reversal takes place. The above axioms imply a weaker version of {WWARP which we call R-WARP*. This condition restricts the number of reversals in any pair  to at most two.
	
	\begin{defn}{\textbf{R-WARP*}:}
		For all menus $S,S', S''$ such that $\{x,y\} \subset S' \subset S \subset S''$ 
		\begin{center}
			$C(S) = C\{x,y\} = x$ and $C(S') = y$ implies $C(S'') \neq y$
		\end{center}
	\end{defn}
	The above discussed restriction can be summarized by the following result
	\begin{lemma}\label{rwwarp}
		If $C$ satisfies R-WARP and WCC$^*$, then it satisfies R-WARP*
	\end{lemma}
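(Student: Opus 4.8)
My plan is to argue by contradiction and to reduce the statement to two non‑minimality claims that plug directly into R-WARP. Assume the hypotheses of R-WARP* together with $C(S'')=y$, so that along $\{x,y\}\subset S'\subset S\subset S''$ we have $C(\{x,y\})=C(S)=x$, $C(S')=y$, and $C(S'')=y$. The key observation is that R-WARP is exactly the assertion that one cannot simultaneously have a menu $A$ with $C(A)=x$ and $y\notin\min(A,\tc)$ and a menu $B$ with $C(B)=y$ and $x\notin\min(B,\tc)$. Taking $A=S$ and $B=S''$, it therefore suffices to establish (i) $y\notin\min(S,\tc)$ and (ii) $x\notin\min(S'',\tc)$: these, with $C(S)=x$ and $C(S'')=y$, contradict R-WARP and close the argument. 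The whole problem thus becomes showing that the double reversal forces $y$ to be non-minimal in $S$ and $x$ to be non-minimal in $S''$ with respect to $\tc$.

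To produce the $\succ_R$ (hence $\tc$) relations needed for (i) and (ii), I would first use WCC$^*$ to localize the two ``$x\to y$'' transitions. Applying WCC$^*$ repeatedly to $S'$ yields a chain from $\{x,y\}$ up to $S'$ on which the choice stays in $\{x,y\}$; since it runs from $x$ to $y$ it contains a one-step flip $C(T_1)=x$, $C(T_1\cup\{z_1\})=y$, i.e. an $(xy)$ reversal due to some $z_1\in S'$. The same device applied to $S''$ (using $C(S)=x$ and $C(S'')=y$) produces an $(xy)$ reversal due to some $z_3\in S''\setminus S$. Each flip is weak or strong. If the lower flip is weak, then by definition $x\succ_R y$ and $y\succ_R z_1$: the former gives $x\tc y$ with $y\in S''$, so $x\notin\min(S'',\tc)$ (fact (ii)), while the latter gives $y\tc z_1$ with $z_1\in S$, so $y\notin\min(S,\tc)$ (fact (i)), and we are done. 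Likewise, a weak upper flip supplies $x\succ_R y$ and hence fact (ii) at once.

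The remaining work is the strong case, which I expect to be the main obstacle. A strong $(xy)$ reversal due to $z_1$ only reveals $z_1\succ_R x$, i.e. that $x$ is dominated rather than dominating, which is the wrong direction for (ii) and says nothing about $y$ for (i). To recover the missing relations I would re-apply WCC$^*$ to the flip-menu $T_1\cup\{z_1\}$ relative to the pair $\{y,z_1\}$: since $C(T_1\cup\{z_1\})=y$ and $x$ must be deleted somewhere along the resulting chain down to $\{y,z_1\}$, the step that removes $x$ should expose a companion reversal on $\{y,z_1\}$ ``due to'' $x$, which (being weak because $z_1\succ_c x$) yields $y\succ_R x$ and thus $y\tc x$ with $x\in S$, giving fact (i). The delicate points, where I expect the argument to be most technical, are (a) that this companion step is genuinely a \emph{named} reversal, which forces one to track the binary comparison $C(\{y,z_1\})$, and (b) securing fact (ii) when both the lower and upper flips are strong, since then $x$ is dominated from several sides and one must still exhibit an element of $S''$ that $x$ dominates; here I would lean on the weak/strong exclusivity property for a fixed pair that is established en route to Theorem \ref{thm1} and uses only WCC$^*$ and R-WARP, either to rule out the problematic all-strong configuration or to force the required $\succ_R$ edge out of $x$.
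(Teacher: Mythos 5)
Your overall strategy --- use WCC$^*$ to localize the $(xy)$ reversal into a one-step flip, classify it as weak or strong, convert it into $\succ_R$ edges and feed the resulting non-minimality facts into R-WARP --- is the same as the paper's, and your weak case is essentially the paper's argument verbatim: a weak flip due to $z_1\in S'$ gives $x\succ_R y$ and $y\succ_R z_1$, hence $y\notin\min(S,\tc)$ and $x\notin\min(S'',\tc)$, and R-WARP finishes. The problem is the strong case, which you correctly identify as the crux but do not close, and neither of your two proposed repairs works. First, the ``companion reversal due to $x$'': WCC$^*$ applied to $T_1\cup\{z_1\}\supset\{y,z_1\}$ yields a chain on which the choice stays in $\{y,z_1\}$, but the step at which $x$ is deleted need not be the step at which the choice flips (and if $C(\{y,z_1\})=y$ there need be no flip at all), so you cannot conclude that there is a $(z_1y)$ reversal \emph{due to} $x$, and the edge $y\succ_R x$ does not follow. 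Second, Exclusivity is not available to you here: in the paper it is Lemma \ref{exclusivity}, whose proof goes through Lemmas \ref{SR1} and \ref{SR} and uses NC and NBC$^*$ in addition to R-WARP and WCC$^*$, so it cannot be invoked under the hypotheses of this lemma.

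The ingredient you are missing is the second leg of the double reversal. A strong flip gives $z_1\succ_R x$ with $z_1\succ_c x$, i.e.\ $C(\{x,z_1\})=z_1$; since $C(S)=x$ and $\{x,z_1\}\subseteq S$, this is a $(z_1x)$ reversal inside $S$, and WCC$^*$ localizes \emph{it} to a flip due to some $z'\in S$. The paper's proof extracts from this a weak $(z_1x)$ reversal due to $z'$, which by the second clause of the definition of $\succ_R$ yields the outgoing edge $x\succ_R z'$ and hence the non-minimality of $x$ --- exactly the fact you could not produce --- and then applies R-WARP (in the paper, to contradict $C(S')=y$ directly rather than to rule out $C(S'')=y$). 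So the gap is concrete: in the strong case you never exhibit an element that $x$ dominates under $\tc$, and the route you sketch for producing one is not licensed by WCC$^*$.
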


	Another interesting implication of the axioms above is a condition which imposes clear limitations on the possibility of certain simultaneous weak and strong reversals. For a given weak reversal it precludes certain strong reversals and vice-versa. This is captured in a property which we call \textit{Exclusivity}.\footnote{This is closely related to the Exclusivity condition of \cite{horan2016simple}} It allows for only one type of reversal between a pair due to any alternative.

	\begin{defn} {\textit{Exclusivity}}: 
		For any pair of alternatives $(xy)$, either:
		\begin{itemize}
			\item $C$ displays no \textbf{weak} $(xy)$ reversal; or
			\item $C$ displays no \textbf{strong} $(xy)$ reversal
			
		\end{itemize}
	\end{defn}
	For any pair of alternatives, this condition precludes choice behavior which exhibits both types of reversals, strong and weak. Put differently, the possibility of strong reversals for a given pair of alternatives is ruled out by observing a single weak reversal for that pair (and vice versa). A corollary of the above result is the following result, which we use in the proof of Theorem \ref{thm1} 
	\begin{lemma}\label{corr}
		If $C$ satisfies (A1)-(A4), then $C$ satisfies \textit{Exclusivity}
	\end{lemma}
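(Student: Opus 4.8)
The plan is to argue by reductio: I would assume that for some pair $(xy)$ the data exhibit both a weak $(xy)$ reversal---due to some $z_1$ with $x \succ_c z_1$---and a strong $(xy)$ reversal---due to some $z_2$ with $z_2 \succ_c x$---and derive a violation of R-WARP. The first step is bookkeeping on $\succ_R$. Directly from its definition, the weak reversal contributes both $x \succ_R y$ (first clause, with witness $w=z_1$) and $y \succ_R z_1$ (second clause, with witness $w=x$), while the strong reversal contributes $z_2 \succ_R x$ (third clause, with witness $w=y$). Applying NBC$^*$ to the links $z_2 \succ_R x \succ_R y \succ_R z_1$ yields the pairwise comparisons $z_2 \succ_c x$, $x \succ_c y$, $y \succ_c z_1$ and $z_2 \succ_c z_1$; asymmetry of $\succ_c$ then forces $z_1 \neq z_2$ and $\{z_1,z_2\}\cap\{x,y\}=\emptyset$. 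Passing to the transitive closure gives $z_2 \tc x \tc y \tc z_1$, and since $\tc$ inherits the asymmetry of $\succ_c$ through NBC$^*$ this chain is acyclic; in particular $x \tc y$ and $y \tc z_1$.

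The engine of the contradiction is R-WARP, which I would read as forbidding the coexistence of two ``clean'' witnesses: a menu $S$ certifying that $x$ is chosen with $y$ unrejected---$\{x,y\}\subseteq S$, $C(S)=x$, $y \notin \min(S,\tc)$---and a menu $S'$ certifying the reverse---$\{x,y\}\subseteq S'$, $C(S')=y$, $x \notin \min(S',\tc)$. Call these witnesses (I) and (II). Witness (II) is immediate from the endpoint of the strong reversal: with $S'=S_2\cup\{z_2\}$ we have $C(S')=y$, and because $x \tc y$ with $y\in S'$ the alternative $x$ $\tc$-dominates something, so $x \notin \min(S',\tc)$. The whole burden therefore falls on producing witness (I): a menu containing $x$ and $y$ in which $x$ is chosen while $y$ is not $\tc$-minimal.

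Here I would exploit $y \tc z_1$, so that any menu containing $x$, $y$ and $z_1$ and choosing $x$ is a witness (I). The clean case is $y \notin \min(S_2,\tc)$ at the start of the strong reversal: then $S_2$ itself is a witness (I) and R-WARP applied to $S_2$ and $S_2\cup\{z_2\}$ closes the argument. The residual case is $y \in \min(S_2,\tc)$, i.e.\ $y$ $\tc$-dominates nothing in $S_2$, which forces $z_1\notin S_2$. To handle it I would inject $z_1$: by NC together with the pairwise order $z_2\succ_c x\succ_c y\succ_c z_1$, the bottom alternative $z_1$ is never chosen from the triple $\{x,y,z_1\}$, so $C(\{x,y,z_1\})\in\{x,y\}$; if this equals $x$ then $\{x,y,z_1\}$ is a witness (I), and if it equals $y$ I would feed the $\{x,y\}$-valued choices into WCC$^*$ to build a nested path from $\{x,y\}$ up to $S_2\cup\{z_2\}$ along which every choice lies in $\{x,y\}$, locate a menu $T$ on it where the choice flips from $x$ to $y$, and argue---using $y\tc z_1$ and, where needed, R-WARP$^*$ (Lemma~\ref{rwwarp}) to bound the reversals along the path---that some such $T$ leaves $y$ unrejected, yielding witness (I).

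The step I expect to be the main obstacle is precisely this construction of witness (I) in the residual case. Its difficulty is structural: the only first-stage domination of $y$ that the reversals directly reveal is $y \succ_R z_1$, and $z_1$ need not appear in any menu on which $x$ is actually chosen, so one cannot simply point to an existing menu; moreover $\tc$ only under-approximates the genuine first-stage rationale, so $y$ may be $\tc$-minimal yet still survive shortlisting, and ruling this out cleanly may require extracting further $\succ_R$-links and re-applying NBC$^*$ to preclude a cycle. Keeping the pruning by NC, the path-building by WCC$^*$ and the reversal-counting by R-WARP$^*$ mutually consistent is where the real work lies; every branch must ultimately be routed back to a single application of R-WARP between a witness (I) and the witness (II) supplied by the strong reversal.
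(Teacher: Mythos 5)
Your bookkeeping on $\succ_R$, the witness (II) you extract from the strong reversal, and the two easy branches (where $y \notin \min(S_2,\tc)$, or where $C(\{x,y,z_1\})=x$) are all correct and each would close with one application of R-WARP. The genuine gap is the remaining branch, $y \in \min(S_2,\tc)$ together with $C(\{x,y,z_1\})=y$, and it cannot be closed by the route you describe. The only alternative you know $y$ to $\tc$-dominate is $z_1$, and in this branch $z_1 \notin S_2$ (and $z_1 \neq z_2$), so no menu $T$ on a WCC$^*$-path from $\{x,y\}$ to $S_2\cup\{z_2\}$ contains $z_1$; hence you have no way to certify $y \notin \min(T,\tc)$ at the flip point, and R-WARP$^*$ only counts reversals --- it does not manufacture a $\tc$-dominated element inside $T$. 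Worse, the strategy is self-blocking here: once $C(\{x,y,z_1\})=y$ with $x \notin \min(\{x,y,z_1\},\tc)$, the triple itself is a witness (II), so R-WARP already entails that no witness (I) for the pair $(x,y)$ can exist anywhere. The contradiction therefore cannot be obtained by ``a single application of R-WARP between a witness (I) and a witness (II)'' for this pair.

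The paper closes exactly this hole by a different mechanism (Lemmas \ref{SR1} and \ref{SR}): assuming a strong $(xy)$ reversal on $S$ due to $z$ together with $x \tc y$, R-WARP first forces $y \in \min(S,\tc)$; then NBC$^*$ gives $C(zy)=z$, WCC$^*$ locates a $(zy)$ reversal due to some $x_1 \in S$, which must be strong (a weak one would place something $\tc$-below $y$ inside $S$), yielding $x_1 \succ_R z$ and hence $x_1 \tc y$; iterating produces ever more elements of $S\cup\{z\}$ that beat $y$ pairwise, until NC is violated because $y$ is chosen from $S\cup\{z\}$ while losing every binary comparison there. So the terminal contradiction is with NC, reached by applying R-WARP to the auxiliary pairs $(z,y),(x_1,y),\dots$ rather than to $(x,y)$ itself. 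To repair your argument you would need to import this inductive chain, i.e., establish Lemma \ref{SR1} (a strong $(xy)$ reversal forces $\neg\, x \tc y$), after which Exclusivity follows in one line from $x \succ_R y$ as the paper does.
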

As we show in Appendix, it is an implication of lemma \ref{exclusivity}.
	\section{Identification} \label{sec6}
	\label{Iden}
There can be multiple representations $(R,P)$ which rationalize a choice function $C$. In this section, we present two results related to identification in the CBR model. Firstly, we define revealed rationales $R^c$ and $P^c$ using the reversals in the choice data. According to our definition, the revealed rationales reflect only those features which are common to every CBR-representation. We then use these rationales to give bounds on both the rationales in the representation. We identify the minimal representation for which the first rationale $R$ is the intersection of first rationales of all the possible CBR representations of $C$. To give the upper bound on the first rationale, we define a revealed rationale which cannot have an intersection with first rationale of any representation. Identification uses a ``small menu property'' of the reversals. All the proofs of this section are relegated to the Appendix.

		\subsection{Small menu property}\label{smp}
It can be shown that any weak reversal in the choice function will be seen in choices from pairs to triples. There will be no binary cycles in the alternatives \textit{involved}\footnote{$x$ is involved in a reversal if either there is a $(xy)$ or $(yx)$ reversal for some $y$ or there is a $(yz)$ reversal due to $x$} in the reversal. Any strong reversal can be seen in either a pair to a triple with a cycle in the pairwise relation, or in a triple to a quadruple with no cycle. 
We define this property as follows:  

\begin{defn}
	A choice function $C$ satisfies \textit{Small Menu Property} (SMP), then the following holds: 
	\begin{itemize}
		\item If there is a \textbf{weak} $(xy)$  reversal due to $z$, then $x \succ_c y \succ_c z$ and $C(xyz)=y$
		\item If there is a \textbf{strong} $(xy)$  reversal due to $z$, then either $x \succ_c y \succ_c z \succ_c x$ cycle exists and $C(xyz)=y$ or $z \succ_c x \succ_c y$ and $C(xyz)=z$ and for some $w$, $C(xyw)=x$ and $C(xyzw)=y$
		
	\end{itemize}
\end{defn}

\begin{lemma} \label{SMP}
	If $C$ is CBR representable, then $C$ satisfies SMP
\end{lemma}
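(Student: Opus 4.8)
The plan is to argue directly from a CBR representation $(R,P)$ of $C$, translating every choice into a statement about $R$ and $P$. The basic dictionary is the pairwise rule: for $a\ne b$, $\min(\{a,b\},R)=\{b\}$ exactly when $aRb$ and is empty when $a,b$ are $R$-unrelated, so $a\succ_c b$ holds iff either $aRb$, or $a,b$ are $R$-unrelated and $aPb$. I will also use repeatedly the membership test that $a\in\min(S,R)$ iff $a$ is $R$-dominated in $S$ and $R$-dominates nothing in $S$, together with the fact that $P$, being complete and asymmetric, is a tournament (so $bPa$ rules out $aPb$).

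The structural pivot is to analyze an arbitrary reversal, i.e.\ a menu $S\supseteq\{x,y\}$ with $z\notin S$, $C(xy)=C(S)=x$ and $C(S\cup\{z\})=y$, by asking whether $x$ survives into the shortlist $(S\cup\{z\})\setminus\min(S\cup\{z\},R)$. If $x$ is pushed into the minimal set then, since $x$ was in the shortlist of $S$, the only possibility is that $x$ is $R$-isolated in $S$ and $zRx$; this gives $z\succ_c x$, a strong reversal. If $x$ survives, then $C(S\cup\{z\})=y\ne x$ forces $yPx$ with $y$ in the new shortlist; this in turn forces $y\in\min(S,R)$ and $yRz$ (the only way adding $z$ can pull $y$ out of the minimal set), while $x\succ_c y$ together with $yPx$ forces $xRy$, whence $xRz$ by transitivity and so $x\succ_c z$, a weak reversal. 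Hence weak reversals are exactly the survival case and strong reversals exactly the push-out case.

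For a weak reversal I then have $xRy$, $yRz$, $xRz$ and $yPx$, which immediately give $x\succ_c y\succ_c z$; evaluating on the triple, the transitive triangle makes $z$ the unique minimal element of $\{x,y,z\}$, so the shortlist is $\{x,y\}$ and $C(xyz)=\max(\{x,y\},P)=y$, as required. For a strong reversal the relations on $\{x,y,z\}$ are pinned to $zRx$, $x,y$ $R$-unrelated, and $xPy$, and the flavor is decided by the $R$-relation of $y$ and $z$. The case $yRz$ is impossible, since $zRx$ would then force $yRx$ by transitivity; and the unrelated case with $zPy$ is impossible, since $zRx$ keeps $z$ in the shortlist of every menu containing $x$, so $zPy$ would block $y$ from being chosen from $S\cup\{z\}$. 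This leaves exactly two surviving possibilities: if $y,z$ are unrelated with $yPz$, one computes $\min(\{x,y,z\},R)=\{x\}$, shortlist $\{y,z\}$ and $C(xyz)=y$, with $x\succ_c y\succ_c z\succ_c x$ forming the cycle (first alternative); and if $zRy$, then $C(xyz)=z$ and $z\succ_c x\succ_c y$ (second alternative).

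The remaining work, which I expect to be the main obstacle, is producing the witness $w$ in the second alternative, since that flavor is not visible on any triple. Here $y$ lies in the shortlist of $S\cup\{z\}$ while being $R$-dominated by $z$, so $y$ must $R$-dominate some $w\in S$ (it cannot dominate $z$). I would then verify directly that this $w$ works: on $\{x,y,w\}$ the only relation is $yRw$, so $w$ is the unique minimal element, the shortlist is $\{x,y\}$ and $C(xyw)=x$; on $\{x,y,z,w\}$ the relations $zRx$, $zRy$, $yRw$ (and $zRw$ by transitivity) make $\{x,w\}$ minimal, the shortlist $\{y,z\}$ and $C(xyzw)=\max(\{y,z\},P)=y$. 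The one input that must be pulled back from the large menu rather than read off the triple is $yPz$, which holds because $z$ belongs to the shortlist of $S\cup\{z\}$ while $y$ is its $P$-maximum.
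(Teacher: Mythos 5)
Your proof is correct, and it takes a genuinely different route from the paper's. The paper derives SMP indirectly: it first shows that CBR representability implies the four axioms (Proposition~\ref{necessity}), and then proves separately (Lemma~\ref{Red}) that the axioms imply SMP by a purely behavioral argument built on R-WARP, WCC$^*$, NC, NBC$^*$ and the derived relation $\succ_R$. You instead work directly with a representation $(R,P)$: your push-out/survival dichotomy for whether $x$ stays in the shortlist of $S\cup\{z\}$ recovers exactly the content of the paper's Lemma~\ref{Rev} (weak reversal $\Rightarrow xRy,\ yPx,\ yRz$; strong reversal $\Rightarrow zRx,\ xPy,\ yPz$, with $x$ $R$-isolated), and you then simply evaluate the CBR formula on $\{x,y,z\}$, $\{x,y,w\}$ and $\{x,y,z,w\}$ to read off the required choices, including the correct extraction of the witness $w$ as an element of $S$ that $y$ must $R$-dominate in order to escape $\min(S\cup\{z\},R)$, and the pull-back of $yPz$ from the large menu. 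Your approach is more self-contained and makes visible precisely which configurations of $R$ and $P$ on the small menu are compatible with each type of reversal; the paper's approach buys economy, since Lemma~\ref{Red} is machinery it must prove anyway for the sufficiency direction of Theorem~\ref{thm1}, so SMP comes essentially for free. Both arguments are sound.
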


This property enables us to provide an alternative formulation of $\succ_R$ relation in terms of choices from pairs and triples. 

\begin{defn}
	For any $x,y \in X$, $x\succ_R y$ if and only if:
	\vspace{-0.2cm}
	\begin{itemize}
		\setlength\itemsep{0.001cm}
		\item [(i)] $x \succ_c y \succ_c z$ and $C(xyz)=y$ for some $z \in X$; or
		\item [(ii)] $z \succ_c x \succ_c y$ and $C(xyz)=x$ for some $z \in X$; or
		\item [(iii)]$y \succ_c z \succ_c x \succ_c y$ and $C(xyz)=z$; for some $z \in X$ or
		\item [(iv)] $x \succ_c y \succ_c z$, $x \succ_c z \succ_c w$, $C(xwz)=z$ and $C(xyz)=x$ for some $z,w \in X$
	\end{itemize}
\end{defn}
\begin{figure}[h]
	\centering
	\includegraphics[scale=0.25]{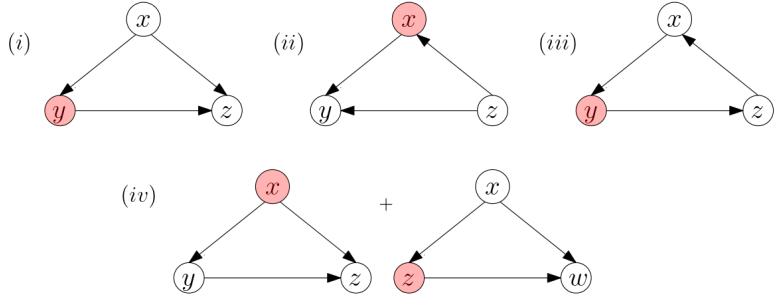}
	\caption{Cases when $x\succ_R y$.  Arrows depict pairwise choices. Colored alternatives are the choices in triples} \label{Reversals}				
\end{figure}
Lemma \ref{SMP} helps us pin down behavior by observing choices over small menus. For any two CBR representable choice functions that agree on \textit{small menus}, i.e. pairs and triples, also agree on larger menus. It is summarized in the result below.
\begin{lemma}\label{Idencor}
	If $C$ and $\bar{C}$ are $CBR$ representable, then $C(\cdot) = \bar{C}(\cdot)$ if and only $C(S) = \bar{C}(S)$ for all $ S \subseteq X$ such that $|S| \leq 3$
\end{lemma}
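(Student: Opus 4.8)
The forward implication is immediate: if $C(\cdot)=\bar{C}(\cdot)$ then in particular the two functions agree on every $S$ with $|S|\le 3$. The content is the converse, which I would prove by induction on $|S|$ after first showing that $C$ and $\bar{C}$ share all of the revealed relations used to analyze CBR. \textbf{Step 1 (revealed relations coincide).} Since $C(xy)=\bar{C}(xy)$ on every pair, the pairwise relations agree, so $\succ_c$ is the same for both. Because $C$ and $\bar{C}$ are CBR representable, Lemma \ref{SMP} gives that both satisfy SMP, and hence $\succ_R$ admits the four-case reformulation stated just above in terms of choices on pairs and triples only. As $C$ and $\bar{C}$ agree on all menus of size at most $3$, each of the four defining conditions holds for $C$ exactly when it holds for $\bar{C}$; thus $\succ_R$ is the same relation for both, and therefore so is its transitive closure $\tc$.

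\textbf{Step 2 (induction on menu size).} The base case $|S|\le 3$ is the hypothesis. Fix $n\ge 4$, assume $C(T)=\bar{C}(T)$ for every $T$ with $|T|<n$, and let $|S|=n$. Suppose toward a contradiction that $C(S)=x$ while $\bar{C}(S)=x'\neq x$. Both functions satisfy (A1)--(A4) by Theorem \ref{thm1}. Applying WCC$^*$ to the pair $\{x,x'\}$ for each function yields alternatives $z,z'\in S\setminus\{x,x'\}$ with $C(S\setminus\{z\})\in\{x,x'\}$ and $\bar{C}(S\setminus\{z'\})\in\{x,x'\}$, and by the inductive hypothesis these subset choices are common to $C$ and $\bar{C}$. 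Since $C(S)\neq\bar{C}(S)$ but the two agree on the pair $\{x,x'\}$ and on every proper subset of $S$, a choice between $x$ and $x'$ must flip exactly at the top level $S$ for at least one of the functions. Such a top-level flip is, by definition, an $(xx')$ or $(x'x)$ reversal due to the removed alternative, and R-WARP together with the shared relation $\tc$ (via NBC$^*$ and NC, which both functions inherit) constrains its direction. The plan is then to show that this reversal is forced to be a weak or a strong reversal and hence, by SMP, is already exhibited on a menu of size at most $3$; but on small menus $C$ and $\bar{C}$ coincide, so both would display the same reversal and make the same choice on $S$, contradicting $x\neq x'$.

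The main obstacle is this inductive step. Although Step 1 makes $\succ_c$, $\succ_R$ and $\tc$ common to both functions, the two CBR representations may rest on genuinely different underlying rationales $(R,P)$, so the argument cannot compare representations directly and must instead pin down the choice on $S$ purely from the shared revealed relations and the axioms. The delicate part is the bookkeeping that converts a top-level discrepancy into a specific weak or strong reversal and then invokes SMP (Lemma \ref{SMP}) to relocate it onto a pair or triple: one checks each SMP case, using WCC$^*$ to supply the descending path and R-WARP, NBC$^*$ and NC to restrict the admissible choices, and verifies that no reversal can be ``hidden'' at size $n$ while being absent from all menus of size at most $3$.
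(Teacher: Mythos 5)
Your forward direction is trivially correct, and your Step 1 (recovering a common $\succ_c$, $\succ_R$, and hence $\tc$ from the shared small menus via the SMP reformulation) is sound and in the spirit of the paper. The genuine gap is in Step 2. You set up an induction on $|S|$ but never execute the inductive step: you describe a ``plan'' and defer the ``delicate bookkeeping,'' and the one substantive inference you do state is a non sequitur. You claim that the top-level discrepancy yields a reversal which, by SMP, is already exhibited on a menu of size at most $3$, ``so both would display the same reversal and make the same choice on $S$.'' Agreement on the reversal pattern over pairs and triples does not by itself determine the choice on $S$: CBR explicitly permits \emph{double reversals}, so the choice between $x$ and $x'$ is allowed to flip back at a larger menu, and knowing that some reversal occurs below $S$ leaves open, a priori, which of the two alternatives is selected at $S$ itself. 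This is exactly the feature that distinguishes CBR from WWARP-based models, so it cannot be waved away.

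What closes the argument in the paper is the step you omit: converting the small-menu reversal into constraints on the \emph{rationales} that must hold in every CBR representation of any choice function with those small-menu choices (Lemma \ref{Rev} together with Proposition \ref{prop3}; cf.\ Theorem \ref{Iden1}). Concretely, with $C(S)=x$, $\bar C(S)=y$ and $C(xy)=\bar C(xy)=x$, the function $\bar C$ displays an $(xy)$ reversal that is weak or strong. If weak, Lemma \ref{Rev} forces $yRz$ and $yPx$ (for the witnessing $z$) in \emph{every} representation of either function, since the witnessing choices live on common small menus; but then $y$ is never minimal in any menu containing $z$, so $x$ cannot be chosen from $S$, contradicting $C(S)=x$. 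The strong cases are handled by chasing a further weak $(zx)$ reversal inside $S$ to conclude $x\notin\min(S,\bar R)$, contradicting $\bar C(S)=y$. Without this translation from observed reversals to representation-level relations your induction does not close; and once one argues at the level of the rationales, no induction on $|S|$ is needed at all --- the paper's proof is a direct case analysis.
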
 

\subsection{Class of representations}
We now give a minimal representation of a choice function $C$. A minimal representation has the minimal number of pairwise relations required in the first rationale to rationalize $C$. Formally, it is the intersection of the first rationales of all possible representations. 

We begin by defining a ``revealed" rationale $\tilde{R^c}$. It captures all the information regarding identification that choice reveals about the first rationale. Define
\begin{center}
	$x{R^c}y \iff x\tc y$
\end{center}
As the first rationale is transitive $R^c$ captures the smallest relation that is required for the representation.
Given the first rationale, the second rationale captures those relations which are needed to make the choice from the shortlisted set. An alternative $y$ is shortlisted in a set $S$ if $y \notin \min(S,R^c) $. If $C(S)=x$, then we need $xPy$ for $x$ to be chosen. Hence, we define 
\begin{center}
	$P^c \equiv \hat{P}_{R_c}$ where $x \hat{P}_{R_c}y$ holds if for some $S \subseteq X$, $C(S)=x$ and $y \notin \min(S,R^c) $
\end{center}
Our next result characterizes the entire class of minimal representations in terms of the revealed rationales 
	
	\begin{theorem} \label{Iden1}
		If $C$ is CBR representable and $(R^*,P^*)$ is a minimal representation of $C$, then:
		\begin{itemize}
			\item[(i)] $R^*=R^c$ 
			\item[(ii)] $P^c \subseteq P^* $ where $P^*$ is a complete rationale 
		\end{itemize}
		
	\end{theorem}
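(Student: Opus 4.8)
The plan is to prove (i) first, since (ii) follows almost immediately once $R^*=R^c$ is established. Because $R^*$ is by definition the intersection of the first rationales of all CBR-representations, establishing $R^*=R^c$ amounts to two inclusions: that $R^c\subseteq R$ for every representation $(R,P)$, whence $R^c\subseteq R^*$; and that $R^c$ is itself the first rationale of some representation, whence $R^*\subseteq R^c$. I would carry these out in turn.

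For the inclusion $R^c\subseteq R$, since every representing $R$ is transitive and $R^c=\tc$ is the transitive closure of $\succ_R$, it suffices to show $x\succ_R y\implies xRy$ for an arbitrary representation $(R,P)$. By Lemma \ref{SMP} and the small-menu reformulation of $\succ_R$, the relation $x\succ_R y$ arises from one of the four patterns (i)--(iv), each expressed purely through choices on pairs and triples, so I would treat each pattern by reading the equation $C(S)=\max(S\setminus\min(S,R),P)$ on the relevant pair and triple. The two weak-reversal patterns (i)--(ii) are handled by noting that a pairwise choice $a\succ_c b$ rules out $bRa$ and, absent any $R$-link, forces $aPb$; assuming $xRy$ fails then keeps the previously-chosen alternative in the shortlist of the triple and produces a $P$-asymmetry contradiction. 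The strong-reversal patterns (iii)--(iv) are the crux: here the pairwise relation contains a $\succ_c$-cycle (or one created together with the auxiliary alternative $w$), so I would use transitivity of $R$ to show that at most one forward $R$-edge can hold on the triple, rule out the wrong configurations because they either leave the chosen alternative unshortlisted or leave a $P$-cycle with no well-defined maximum, and conclude that the only surviving configuration is exactly $xRy$. This transitivity step is what makes the strong cases work and is the main obstacle.

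For the reverse inclusion I would exhibit a concrete representation whose first rationale is $R^c$. Take $R^c$ as the first rationale and let $P^*$ be any completion of $P^c$ to an asymmetric complete tournament, and define $\hat C(S)=\max(S\setminus\min(S,R^c),P^*)$. Rather than verify $\hat C=C$ directly on all menus, I would invoke Lemma \ref{Idencor}: since $\hat C$ is CBR-representable by construction, it is enough to check $\hat C(S)=C(S)$ for all $S$ with $|S|\le 3$. The work here is twofold. First, one must show that $P^*$ can be chosen so that the relevant maxima are always well-defined and unique, which reduces to checking that $P^c$ has no cycle on any set shortlisted under $R^c$; here \emph{Exclusivity} and NBC$^*$ control which reversals, hence which $P^c$-edges, can co-occur. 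Second, one carries out the finite small-menu comparison, verifying that rejecting exactly the $R^c$-minimal alternatives and then maximizing $P^c$ reproduces the observed pairwise and triple choices. Both steps reduce to the same bookkeeping as the four-case analysis above, so they are routine once that analysis is in hand.

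Finally, (ii) is immediate from $R^*=R^c$. Suppose $xP^c y$ with $x\neq y$; by definition there is a menu $S$ with $C(S)=x$ and $y\notin\min(S,R^c)$. Since $R^*=R^c$, also $y\notin\min(S,R^*)$, so $y$ lies in the shortlist $S\setminus\min(S,R^*)$; as $C(S)=x=\max(S\setminus\min(S,R^*),P^*)$ and $y$ is a distinct element of that shortlist, the definition of $\max$ forces $xP^* y$. Hence $P^c\subseteq P^*$, and $P^*$ is complete by hypothesis, which completes the argument.
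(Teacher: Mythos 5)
Your proposal is correct and follows essentially the same route as the paper: for (i) you prove $R^c\subseteq R$ for every representation by showing $x\succ_R y\implies xRy$ (the paper's Lemma \ref{Rev} and Corollary \ref{Cor}), and you obtain $R^*\subseteq R^c$ by exhibiting the explicit representation built on $R^c$ (the paper simply cites the construction in Proposition \ref{prop3}, while you re-verify it through the small-menu reduction of Lemma \ref{Idencor}); part (ii) is argued identically. The differences are only in bookkeeping, not in the decomposition or the key facts used.
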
	
	
Now, we discuss the upper bound on the first rationale recovered from any representation of the model. For this we find out the pairs that cannot be related in any representation.
	
	\begin{defn}
		Given a choice function $C$ define $\hat{Q}$ as  $x\hat{Q}y$ if and only if :
		\begin{itemize}
			\setlength\itemsep{0.001cm}
			\item \textbf{strong} $(xw)$  reversal on set $S$ and $y \in S$ for some $w \in S \setminus \{x\}$
			\item \textbf{strong} $(yw)$  reversal on set $S$ and $x \in S$ for some $w \in S \setminus \{x\}$
			\item \textbf{weak} $(wx)$  reversal and $x,y,w \in S$, $C(S)=w$ for some $w \in S \setminus \{x\}$
		\end{itemize}
	\end{defn}
	
	The largest possible $R$ that can be a part of the representation $(R,P)$ will be the largest transitive relation that is a subset of $\succ_c \setminus \hat{Q} \supset R^c$. We define such largest transitive relation as $\bar{R}$. Note that this largest relation need not be unique. For the second rationale, analogous to $\hat{P}_{R_c}$, we define $x \hat{P}_{R}y$ if for some $S \subseteq X$, $C(S)=x$ and $y \notin \min(S,R) $. Those pairwise relations which are not covered in the first relation are added in the second rationale. The following result provides the class of identified rationales.
	
	\begin{theorem} \label{Iden2}
		If $C$ is CBR representable, then $(R,P)$ represents $C$ if and only if:
		\begin{itemize}
			\item[(i)] $R$ is a transitive rationale such that $R^c \subseteq R \subseteq \bar{R}$
			\item[(ii)] $P$ is a complete rationale such that $P \supseteq \hat{P}_R \cup (\succ_c \setminus R)$
		\end{itemize}
		
	\end{theorem}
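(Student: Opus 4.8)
The plan is to prove the two directions separately: for necessity, fix an arbitrary representation $(R,P)$ and establish the three inclusions $R^c\subseteq R$, $R\subseteq\bar{R}$, and $P\supseteq\hat{P}_R\cup(\succ_c\setminus R)$; for sufficiency, show these bounds are jointly enough by reducing to small menus via Lemma~\ref{Idencor}. The lower bound $R^c\subseteq R$ follows as in Theorem~\ref{Iden1}: I would show that every generator of $\succ_R$ forces the matching pairwise $R$-relation in any representation. Reading each defining case through the small-menu description of Lemma~\ref{SMP}, a weak $(ab)$ reversal due to $c$ can be represented only if $b$ is shortlisted out of $\{a,b\}$ and pulled back in by $c$, which forces $aRb$ (and $bRc$); the strong cases are analogous. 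Since $R$ is transitive, $\tc=R^c\subseteq R$. The inclusion $R\subseteq\succ_c$ is immediate: if $xRy$ then $y\in\min(\{x,y\},R)$, so the shortlist of $\{x,y\}$ is $\{x\}$ and $C(xy)=x$.

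For the $P$-bounds, if $x\hat{P}_R y$ there is an $S$ with $C(S)=x$ and $y\notin\min(S,R)$, so $y$ is shortlisted and $x$ is the $P$-maximum of the shortlist, giving $xPy$ and hence $\hat{P}_R\subseteq P$; and if $x\succ_c y$ but not $xRy$, then (using $R\subseteq\succ_c$ to rule out $yRx$) both alternatives are $R$-isolated in $\{x,y\}$, so $x=C(xy)$ is the $P$-maximum and $xPy$, giving $\succ_c\setminus R\subseteq P$. The delicate part of necessity is $R\cap\hat{Q}=\emptyset$, which I would argue by contraposition for each of the three clauses defining $\hat{Q}$: assuming $xRy$, translate the stipulated reversal (via Lemma~\ref{SMP}) into forced $R$- and $P$-relations and derive a contradiction. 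For the weak clause, if $C(S)=w$ with $x,y,w\in S$ and a weak $(wx)$ reversal is present, then $wRx$ is forced; with $xRy$, transitivity makes $x$ dominate $y$, so $x$ is shortlisted in $S$ and $wPx$, while the triple witnessing the reversal forces $xPw$, contradicting asymmetry of $P$. The two strong clauses are handled similarly, producing either a $P$-asymmetry violation or a $\succ_c$-cycle forbidden by NBC$^*$ (which $C$ satisfies, being CBR representable).

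For sufficiency, set $C'(S)=\max(S\setminus\min(S,R),P)$. Since $R$ is transitive and asymmetric it is acyclic, so $\max(S,R)\neq\emptyset$ and the shortlist $S\setminus\min(S,R)$ is nonempty; hence $C'$ is a genuine CBR representation and is itself CBR representable. By Lemma~\ref{Idencor} it therefore suffices to check $C'=C$ on menus of size at most three. On pairs this is exactly the pair computation above. On a triple $T$ the tournament induced by $\succ_c$ on $T$ is either transitive or a $3$-cycle. If transitive, say $a\succ_c b\succ_c c$, then \emph{Never Chosen} gives $C(T)\in\{a,b\}$: $C(T)=a$ is the no-reversal case, and $C(T)=b$ is a weak reversal in which $aR^c b$ and $bR^c c$ are forced, pinning $\min(T,R)$; if the triple is a cycle, the relevant clause of the small-menu definition of $\succ_R$ forces exactly one $R$-edge while $\hat{Q}$ (i.e.\ $R\subseteq\bar{R}$) excludes the others, again pinning $\min(T,R)$. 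In each case the shortlist is determined independently of the particular $R$ in the interval, and $\hat{P}_R\cup(\succ_c\setminus R)\subseteq P$ forces the $P$-maximum of that shortlist to equal $C(T)$.

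I expect the main obstacle to be the strong-reversal bookkeeping: both the two strong clauses of $\hat{Q}$ in the necessity step and the cyclic configurations (together with their associated quadruple witnesses) in the sufficiency step require carefully reading off the forced $\succ_R$-edges through Lemma~\ref{SMP} and invoking NBC$^*$ to rule out contradictory $\succ_R$-chains. A secondary point to verify is that, on every triple, the shortlist $T\setminus\min(T,R)$ is invariant as $R$ ranges over the interval $[R^c,\bar{R}]$, so that the single observed choice $C(T)$ is indeed reproduced by every admissible representation.
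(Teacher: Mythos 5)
Your necessity direction is essentially the paper's own argument: the lower bound $R^c\subseteq R$ is imported from Theorem \ref{Iden1} via Lemma \ref{Rev}, the upper bound is obtained by showing $\hat{Q}\cap R=\emptyset$ through a case analysis of the three clauses defining $\hat{Q}$ (the paper's treatment of the weak clause is exactly your ``$wRx$ plus $xRy$ forces $x$ to be shortlisted while $xPw$, contradicting asymmetry'' argument), and the $P$-bound is the same pairwise computation. That part is fine. The paper is in fact much terser than you are on the sufficiency (``if'') direction -- it essentially does not write it out -- so your attempt to supply it is welcome, but it is also where the problem lies.

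The gap is a circularity in your use of Lemma \ref{Idencor}. That lemma compares two choice functions \emph{both of which are already known to be CBR representable}, i.e.\ well-defined single-valued maps on all of $\mathcal{P}(X)$. You assert that $C'(S)=\max(S\setminus\min(S,R),P)$ is ``a genuine CBR representation'' because the shortlist is nonempty, but nonemptiness of the shortlist does not give nonemptiness of its $P$-maximum: $P$ is only a tournament, so $\max(\cdot,P)$ can be empty on a $P$-cycle, and showing it is a singleton on every menu $S$ (not just pairs and triples) is precisely the content of the claim that $(R,P)$ represents $C$. So the reduction to small menus assumes what is to be proved on large menus. The fix is to argue directly: show that $C(S)\notin\min(S,R)$ for every menu $S$ and every $R$ in the interval $[R^c,\bar{R}]$ (if $C(S)=x\in\min(S,R)$, take $y\in S$ with $yRx$, hence $y\succ_c x$ since $R\subseteq\succ_c$; WCC$^*$ and R-WARP produce a $(yx)$ reversal due to some $z\in S$, and the weak case forces $x\succ_R z$, i.e.\ $xR^cz\subseteq R$, while the strong case puts $(y,x)$ into $\hat{Q}$, contradicting $R\subseteq\succ_c\setminus\hat{Q}$). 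Once $C(S)$ is in the shortlist, the definition of $\hat{P}_R$ gives $C(S)\,\hat{P}_R\,y$ for every other shortlisted $y$, and $\hat{P}_R\subseteq P$ plus asymmetry of $P$ makes $C(S)$ the unique $P$-maximum of the shortlist, which proves $C'=C$ on all menus in one step and renders the small-menu detour (and the shortlist-invariance verification you flag at the end) unnecessary.
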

	
	\section{Comparison with Related Models} \label{sec7}

	 The violation of rationality (WARP) is attributed to violation of either of the following two consistency conditions: \textit{Always chosen}\footnote{If $x$ is chosen in pairs, then it must been chosen union of those pairs} or \textit{No Binary Cycles}\footnote{Relation derived from pairwise choices cannot have a cycle}. Various boundedly rational models explain violation of rationality using violation of either of these conditions. \cite{manzini2007sequentially} show that RSM is able to accomodate the violation of \textit{No Binary Cycles}. However, a violation of \textit{Always Chosen} cannot be explained by RSM. The ego preserving heuristic (EPH) choice function of \cite{RePEc:ash:wpaper:29} on the other hand is able to accomodate the violations of \textit{Always Chosen} but unable to explain the violation of \textit{No Binary Cycles}. CBR however, is able to explain both the violations.
	 
	 We now compare some related models with CBR and show that if choice function is CBR representable it is equivalent to the related model if strong/weak reversal does not exist.

	\paragraph{(I) Rational Shortlist method:}
	RSM is not a special case of our model. \cite{manzini2007sequentially} characterize it by two axioms: Expansion (EXP) \footnote{For all $S,S' \supset \{x,y\}$, $C(S)= C(S') = x $ implies $C(S \cup S') = x$} and WWARP. Our model may violate WWARP. However, as shown earlier, it satisfies a weaker version of this axiom (R-WARP*) which allows for at most two reversals. Also, CBR may violate EXP as a weak $(xy)$ reversal due to $z$ implies $C(xyz)=y, \ C(xy)=x=C(xz), \ C(yz)=y$ which violates \textit{always chosen}. 
	
	The reversals discussed in this paper establish a relation between our model and RSM. 
	\begin{proposition}\label{RSM}
		If Choice function $C$ is CBR representable, then $C$ is RSM if and only if $C$ has displays no weak reversals
	\end{proposition}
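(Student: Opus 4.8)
The plan is to prove both directions from the given CBR representation $(R,P)$ of $C$ together with the small menu property (Lemma \ref{SMP}), exploiting that the only structural difference between CBR and RSM is that CBR retains ``intermediate'' alternatives (those $R$-dominated by some element yet $R$-dominating some element) in its shortlist $S\setminus\min(S,R)$, whereas RSM keeps only the $R$-maximal alternatives $\max(S,R)$.

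For the easy direction (RSM $\Rightarrow$ no weak reversals) I would argue by contradiction. Suppose $C$ displays a weak $(xy)$ reversal due to $z$. By SMP this is witnessed on a triple: $x\succ_c y\succ_c z$ and $C(xyz)=y$, and in particular $C(xy)=x$ and $C(xz)=x$ (since a weak reversal requires $x\succ_c z$). An RSM choice function satisfies Expansion, so $C(xy)=C(xz)=x$ forces $C(xyz)=x$, contradicting $C(xyz)=y$. (If one prefers not to lean on the precise form of Expansion, the same contradiction drops out of any RSM representation $(P_1,P_2)$: $C(xy)=x$ and $C(xz)=x$ mean $x$ is not $P_1$-dominated by $y$ or $z$, hence $x$ survives $\max(\{x,y,z\},P_1)$ and must be $P_2$-beaten by $y$ for $C(xyz)=y$, which is incompatible with $C(xy)=x$ giving $x\,P_2\,y$.)

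For the substantive direction (no weak reversals $\Rightarrow$ RSM) I would show that $(R,P)$ is itself an RSM representation, i.e. $C(S)=\max(\max(S,R),P)$ for every $S$. Write $N=S\setminus\min(S,R)$ and $M=\max(S,R)\subseteq N$, and let $c=\max(N,P)=C(S)$. Since $c$ is the $P$-maximum of the larger set $N\supseteq M$, either $c\in M$ (and then $c$ is also the unique $P$-maximum of $M$, so the RSM rule returns $c$) or $c\in N\setminus M$ is intermediate. I would rule out the latter by producing a weak reversal. If $c\in N\setminus M$, then $c$ is $R$-dominated, so by transitivity and finiteness there is a maximal $b\in M$ with $bRc$; and $c$ is not minimal, so $c$ dominates something, and descending the $R$-chain yields a minimal $d\in\min(S,R)$ with $cRd$. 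On the triple $\{b,c,d\}$ the only $R$-relations are $bRc$, $cRd$, and $bRd$ (transitivity), so $\min(\{b,c,d\},R)=\{d\}$ and the shortlist is $\{b,c\}$; hence $C(bcd)=\max(\{b,c\},P)=c$, because $c\,P\,b$ as $c$ $P$-beats every element of $N\ni b$. On the other hand $C(bc)=b$ since $c$ is $R$-dominated, hence minimal, in $\{b,c\}$, and $C(bd)=b$ since $d$ is minimal in $\{b,d\}$, giving $b\succ_c d$. Thus $C(bc)=b$, $C(bcd)=c$ and $b\succ_c d$ constitute a weak $(bc)$ reversal due to $d$, contradicting the hypothesis. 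Therefore $c\in M$ for every $S$, which establishes $C(S)=\max(\max(S,R),P)$ and exhibits $(R,P)$ as an RSM representation.

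I expect the main obstacle to be this second direction, and specifically the bookkeeping required to locate $b$ and $d$ and to verify that $\{b,c,d\}$ carries exactly the claimed $R$-structure, so that the shortlist is precisely $\{b,c\}$ with no spurious domination disturbing the computation. Transitivity of $R$ and the characterization of minimality make this clean: $b$ is found by climbing to a maximal element above $c$, $d$ by descending to a minimal element below $c$, and distinctness of $b,c,d$ follows from asymmetry of $R$ together with the facts that $d$ dominates nothing while $c$ dominates $d$. A secondary point to check is that the RSM rule is single-valued, i.e. $\max(M,P)$ is well defined; this is immediate once $c\in M$, since $c$ $P$-dominates all of $N\supseteq M$.
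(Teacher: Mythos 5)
Your proof is correct, but the substantive direction is argued quite differently from the paper. For ``RSM implies no weak reversals'' both you and the paper reduce the reversal to a triple via the small menu property and contradict Expansion, so that direction is essentially the same (your parenthetical alternative is slightly imprecise --- $C(xy)=x$ need not give $x\,P_2\,y$, it could give $x\,P_1\,y$ instead, though either case still kills $C(xyz)=y$ --- but this is an optional aside). For the converse, the paper proceeds axiomatically: it shows that a CBR-representable $C$ with no weak reversals satisfies Expansion (using the auxiliary Negative Expansion lemma and the double-reversal lemma) and WWARP, and then invokes the Manzini--Mariotti characterization of RSM. You instead argue constructively that the CBR pair $(R,P)$ is \emph{itself} an RSM representation: if the chosen element $c$ of $S$ were in $\bigl(S\setminus\min(S,R)\bigr)\setminus\max(S,R)$, climbing to an $R$-maximal $b$ above $c$ and descending to an $R$-minimal $d$ below $c$ manufactures a weak $(bc)$ reversal due to $d$ on the triple $\{b,c,d\}$, contradicting the hypothesis; hence $C(S)\in\max(S,R)$ and the RSM rule agrees with $C$. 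Your route buys more: it bypasses the appeal to the RSM characterization theorem and the lemmas NE and dbl, and it delivers the extra information that the very same rationales $(R,P)$ serve as the RSM representation, which the paper's argument does not provide. The paper's route stays uniformly within its reversal-and-axiom framework and reuses machinery already developed for other results. Both arguments are sound.
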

	Proof of this result can be found in the Appendix \ref{proofsec7}.

	\paragraph{(II) Transitive Shortlist method:}
	The transitive shortlist method (TSM) is a variant of the RSM where both the rationales are transitive (possibly incomplete). \cite{horan2016simple} analyzes this choice procedure in terms of two choice reversals:  direct and weak$\st$ \footnote{Weak reversal of TSM. $\star$ added to avoid confusion with weak reversal of this paper} reversal. 
	 A direct $\langle x , y \rangle$ reversal on $B \subset X \setminus \{x\}$ is defined as 
	 \vspace{-0.3cm}
	$$C(B) = y \ \ \text{and} \ \ C(B \cup \{x\}) =z \notin \{x,y\}$$
	A weak $\langle x , y \rangle$ reversal on $B \supset \{x,y\}$ is defined as
	 \vspace{-0.3cm}
	$$C(xy) = x \ \ \text{and} \ \ C(B \setminus \{y\}) \neq C(B)$$
	TSM satisfies Exclusivity condition which says that for a pair $x,y$, either there is no direct $\langle x , y \rangle$ reversal on $B \subset X \setminus \{x\}$ or, there is no weak$\st$ $\langle x , y \rangle$ reversal. CBR violates this axiom when there is a double reversal. It can be seen in Example \ref{psi}. There is a direct $\langle z , x \rangle$ reversal on $\{x,z\}$ and a weak$\st$ $\langle z , x \rangle$ reversal on $\{y,z,w\}$. Another property satisfied by TSM is EXP(hence \textit{always chosen}), which CBR need not satisfy. Thus, TSM is also not a special case of CBR.

	Note that since TSM satisfies WWARP, in the case of a direct $\langle x,y \rangle $ reversal it must be be that $C(yz)=y$. Hence, whenever there is a strong or a weak reversal, we have a direct reversal. 
	Conversely, as TSM also satisfies \textit{always chosen}, a choice function cannot display a weak reversal. Therefore this would be a strong reversal. 

As in the case of RSM, our model relates to TSM in the following way	

	\begin{proposition}\label{TSM}
	If Choice function $C$ is T-CBR representable, then $C$ is TSM if and only if $C$ displays no weak reversals
\end{proposition}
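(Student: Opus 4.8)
The plan is to establish both implications while paralleling the argument for Proposition~\ref{RSM}, the extra work being to keep the second rationale transitive throughout.

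For the ``only if'' direction, I would assume $C$ is T-CBR representable and is a TSM, and show it has no weak reversals. Every TSM satisfies Expansion, and Expansion implies \textit{Always Chosen}. Suppose, for contradiction, that $C$ displayed a weak $(xy)$ reversal due to some $z$. Since $C$ is T-CBR (hence CBR) representable, the Small Menu Property (Lemma~\ref{SMP}) applies and yields $x \succ_c y \succ_c z$ together with $C(xyz)=y$. But $x \succ_c y$ and $x \succ_c z$ mean $C(xy)=C(xz)=x$, so $x$ wins both binary comparisons inside $\{x,y,z\}$ while $C(xyz)=y\neq x$, contradicting \textit{Always Chosen}. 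Hence no weak reversals can occur.

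For the ``if'' direction, fix a T-CBR representation $(R,P)$ of $C$ with $R$ transitive and $P$ a linear order, and assume $C$ has no weak reversals. Because a linear order is complete, $C$ is CBR representable, so Proposition~\ref{RSM} already gives an RSM representation; the only new content is to produce one whose second rationale is transitive. I would do this by showing that the given pair $(R,P)$ is itself a TSM representation, i.e. $C(S)=\max(\max(S,R),P)$ for all $S$. Since $\max(S,R)\subseteq S\setminus\min(S,R)$ and $P$ is a linear order, this reduces to a single claim: the chosen alternative $C(S)=\max(S\setminus\min(S,R),P)$ lies in $\max(S,R)$. Indeed, once $C(S)$ is $R$-maximal, the linear order $P$ attains the same maximum on $S\setminus\min(S,R)$ and on its subset $\max(S,R)$, giving the desired equality.

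The crux, and the step I expect to be the main obstacle, is proving that choices are always $R$-maximal. I would argue the contrapositive: if $C(S)=x$ with $x\notin\max(S,R)$, then $C$ has a weak reversal. Here some $w\in S$ satisfies $wRx$, and since $C(S)=x$ forces $x\notin\min(S,R)$, some $z\in S$ satisfies $xRz$; thus $x$ is intermediate in the $R$-chain $w\to x\to z$. Deleting the whole set $D=\{v\in S: xRv\}$ of alternatives dominated by $x$ leaves $x$ dominated by $w$ but dominating nothing, so $x\in\min(S\setminus D,R)$ and $C(S\setminus D)\neq x$; reinserting the members of $D$ one at a time isolates a single addition $v\in D$ (with $x\succ_c v$, since $xRv$) after which the choice first becomes $x$. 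The delicate part is to verify that the resulting reversal is genuinely \emph{weak} rather than \emph{strong}: this requires matching the pairwise-choice bookkeeping in the definition of a reversal and excluding the possibility that it is strong. I would pin this down using NC, NBC$^*$, and the Exclusivity property (Lemma~\ref{corr}), together with transitivity of $R$, to control the relevant binary comparisons and conclude that adding the $x$-dominated alternative $v$ triggers a weak reversal, contradicting the hypothesis. With the $R$-maximality claim established, $(R,P)$ is a TSM representation and the proof is complete.
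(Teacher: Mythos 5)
Your ``only if'' direction is fine and is exactly the paper's argument for Proposition \ref{RSM} transposed to TSM: Expansion (hence \emph{Always Chosen}) rules out the pattern $x \succ_c y \succ_c z$ with $C(xyz)=y$ that Lemma \ref{Rev} (equivalently, the Small Menu Property) forces from a weak reversal. Your ``if'' direction, however, departs from the paper --- which, as for Proposition \ref{RSM}, verifies the behavioral axioms characterizing the target model rather than exhibiting a representation --- and it is precisely there that the argument has a genuine gap. The plan (show $C(S)\in\max(S,R)$ and conclude that the given pair $(R,P)$ is itself a TSM representation) is sound, but the reversal you construct is not one the paper's machinery can classify. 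After deleting $D=\{v\in S: xRv\}$ and reinserting its elements one at a time, the switch point gives $C(T)=u\neq x$ and $C(T\cup\{v\})=x$ for some previously chosen $u$. This is a reversal in the paper's sense only if in addition $C(\{u,x\})=u$, which nothing in your construction guarantees; and even then, whether it is weak or strong is decided by the binary comparison between $u$ and $v$, not by the fact $x\succ_c v$ that you actually derive. The closing appeal to NC, NBC$^*$ and Exclusivity is a placeholder rather than a proof, and as set up the bookkeeping does not close.

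The gap is fixable, and more cheaply than your construction suggests. If $C(S)=x\notin\max(S,R)$, pick $w\in S$ with $wRx$; then $x\in\min(\{w,x\},R)$, so $C(wx)=w$ while $C(S)=x$. Iterating WCC$^*$ yields a chain from $\{w,x\}$ up to $S$ along which the choice always lies in $\{w,x\}$, so at some link the choice flips from $w$ to $x$ upon adding a single alternative $v$: this \emph{is} a $(wx)$ reversal due to $v$ in the paper's sense, because the binary choice $C(wx)=w$ is controlled by construction. Lemma \ref{Rev} then does the classification for you: a strong $(wx)$ reversal due to $v$ would force $\neg\, wRx$, contradicting the choice of $w$, so the reversal is weak --- the contradiction you want. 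With that step repaired, your observation that $R$-maximality of $C(S)$ together with linearity of $P$ gives $\max(\max(S,R),P)=\max(S\setminus\min(S,R),P)$ is correct, and you obtain a direct TSM representation rather than re-verifying the TSM axioms as the paper's (terser) proof does.
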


\paragraph{(III) Ego-Preserving Heuristic :} 
\cite{RePEc:ash:wpaper:29} propose a two-stage choice model wherein both the rationales are linear orders. DM first eliminates the worst alternative with respect to the first order and then choose the maximal alternative with respect to the second order. In terms of reversals, our model is related to Ego-Preserving Heuristic (EPH). EPH is characterized by NC, NBC and a weaker form of WARP (WARP-EP). Their model cannot accommodate violation of \textit{no binary cycle}, but allows for violation of always chosen and hence can permit weak reversals. It turns out that their models does not allow for strong reversals as it leads to violation of WARP-EP.

\section{Final remarks} \label{sec8}
In this paper we introduced a new two-stage choice procedure that departs from the idea of shortlisting by maximization. We axiomatically characterized this procedure using intuitive behavioral properties. Bounds on the first and the second rationales were provided to identify the representations for a given choice function. We also compared this procedure with the Rational Shortlist Method (RSM) of \cite{manzini2007sequentially}. The main contribution of our model is its ability to explain double reversals observed experimentally that the existing models are unable to do. In addition to that, our choice procedure also provides an alternative explanation for single reversals discussed in the literature.

The first rationale in our model can be interpreted in several ways. One such interpretation is when alternatives have multiple attributes. DM shortlist those alternatives which are either non-comparable with respect to any attribute, or dominate some alternative with respect to at least one attribute. To illustrate, consider $X= \{x,y,z\}$ and two attributes $R_1 = \{(x,y)\}$ and $R_2 = \{(y,z)\}$. Shortlisting using these two attributes is equivalent to shortlisting by a single transitive rationale $R= \{(x,y), (y,z), (x,z)\}$. Another interpretation is related to ``social influence''. DM is socially influenced by certain reference groups that she relates to: people that she finds similar to herself in a given situation. Pairwise choices of this group are observed, which in aggregate are transitive. Over these choices, she uses her preference to make the final choice.
Given the interpretations above, it is natural to assume that the first rationale need not be complete. 

Our model is also a natural way of choosing in several contexts. One such setting is online dating. Users on a popular dating app, Tinder, are on average presented with 140 partner options a day (\cite{doi:10.1177/1948550619866189}). Large number of partner options sets off a rejection mindset: people become
increasingly likely to reject potential partners before choosing. It may be an interesting future topic to study a possible extension of this model in a stochastic setup. One can think of a collection of rationales and a probability distribution over them that one uses to \textit{reject} ``worse" alternatives before making a final choice.

\pagebreak	
	\appendix
	\section{Appendix}
\subsection{Proof of Theorem 1}

First we prove the necessity of the axioms
	
\begin{proposition}\label{necessity} If $C$ is CBR representable, then it satisfies (A1)-(A4)
		
\end{proposition}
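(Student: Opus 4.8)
The plan is to verify each of the four axioms directly from the CBR representation $C(S)=\max(S\setminus\min(S,R),P)$, where $R$ is transitive and $P$ is complete. I would fix such a representation $(R,P)$ throughout and exploit the structure of $\min(S,R)$: recall that $x\notin\min(S,R)$ exactly when $x$ is either $R$-isolated in $S$ or $R$-dominates some element of $S$. The key observation to establish first is the relationship between the revealed relation $\succ_R$ (and its transitive closure $\tc$) and the underlying $R$: I expect that $x\succ_R y$ forces $xRy$ (or at least $x\,\tc\,y \Rightarrow$ a statement about $R$ strong enough to control the minimal sets), and more importantly that $y\notin\min(S,\tc)$ implies $y\notin\min(S,R)$, so that $y$ survives the first-stage rejection. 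This linkage is what makes R-WARP (A4), phrased via $\tc$, provable from a representation phrased via $R$.

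For \textbf{(A1) Never Chosen}, I would argue contrapositively: if $C(S)=x$, then $x=\max(S\setminus\min(S,R),P)$, so in particular $x\in S\setminus\min(S,R)$ and $x$ is $P$-maximal there. From $x\notin\min(S,R)$ and completeness of $P$ I want to produce some $y\in S$ with $C(xy)=x$; the natural candidate is an element $y$ that $x$ dominates under $R$ (if $x$ is non-isolated) or, if $x$ is isolated, an element it beats under $P$. This shows $x$ is chosen in at least one pairwise comparison, which is the contrapositive of NC. For \textbf{(A2) WCC$^*$}, suppose $C(S)\in\{x,y\}$, say $C(S)=x$. I need a $z\in S\setminus\{x,y\}$ with $C(S\setminus\{z\})\in\{x,y\}$. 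The idea is that removing a single alternative can only change the minimal set and the $P$-max in a controlled way; I would look for a $z$ whose removal keeps $x$ out of the minimal set and keeps $x$ as the $P$-maximum, so $C(S\setminus\{z\})=x$ still holds. The delicate case is when every single deletion pulls $x$ (or $y$) into $\min(\cdot,R)$ — here I would use transitivity of $R$ and the fact that $|S|>2$ to find a safe $z$.

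For \textbf{(A3) NBC$^*$}, the chain $x_1\succ_R x_2\succ_R\cdots\succ_R x_n$ should, via the linkage lemma above, translate into $R$-relations (or $\succ_c$-relations) that chain together; using transitivity of $R$ and asymmetry I would derive $x_1\succ_c x_n$, i.e. $C(x_1x_n)=x_1$. This is essentially bootstrapping: each $\succ_R$ link gives me a pairwise fact about $R$, and transitivity closes the chain. For \textbf{(A4) R-WARP}, assume $y\notin\min(S,\tc)$, $C(S)=x$, and $x\notin\min(S',\tc)$ with $\{x,y\}\subseteq S,S'$; I must show $C(S')\neq y$. Using the linkage, $y\notin\min(S,\tc)\Rightarrow y\notin\min(S,R)$, so $y$ is shortlisted in $S$ yet $x$ is chosen, forcing $xPy$ by $P$-maximality of $x$ in $S\setminus\min(S,R)$. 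Symmetrically, $x\notin\min(S',\tc)\Rightarrow x\notin\min(S',R)$, so $x$ is shortlisted in $S'$; since $xPy$ and $P$ is a (complete, asymmetric) relation, $y$ cannot be the $P$-maximum of $S'\setminus\min(S',R)$ whenever $x$ is present in that set, giving $C(S')\neq y$.

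The main obstacle I anticipate is precisely the linkage lemma: rigorously proving that $z\notin\min(S,\tc)$ implies $z\notin\min(S,R)$, i.e. that the transitive closure of the revealed relation $\succ_R$ is "dominated" by the representing relation $R$ in the sense needed to transfer minimality statements. This requires unpacking each of the three defining cases of $\succ_R$ (weak $(xy)$, weak $(wx)$, strong $(yw)$ reversals) and checking that each genuinely reflects an $R$-domination fact in every representation — and then showing transitive closure preserves this. Once that translation is secured, A1, A3, and A4 follow fairly mechanically, and only the careful case analysis for finding the safe deletion element $z$ in WCC$^*$ (A2) remains genuinely fiddly.
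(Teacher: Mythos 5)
Your overall strategy --- fix a representation $(R,P)$, establish the linkage $\succ_R\ \subseteq R$ (hence $\tc\ \subseteq R$ by transitivity of $R$), and then verify each axiom directly --- is the same as the paper's, and your treatments of NC and NBC$^*$ are essentially the paper's arguments. But the step you lean on for R-WARP, namely that $y\notin\min(S,\tc)$ implies $y\notin\min(S,R)$, is false, and it is not merely an obstacle that a more careful proof of the linkage lemma would remove. From $\tc\ \subseteq R$ you can only transfer the \emph{positive} half of non-minimality: if $y\,\tc\,z$ for some $z\in S$, then $yRz$ and so $y\notin\min(S,R)$. The other way an alternative escapes $\min(S,\tc)$ is by being $\tc$-undominated (indeed $\tc$-isolated) in $S$, and since $\succ_R$ records only observed reversals it is typically very sparse, so this is the generic case. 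A $\tc$-isolated $y$ can perfectly well satisfy $zRy$ for some $z\in S$ while $R$-dominating nothing, i.e.\ $y\in\min(S,R)$; then $y$ is rejected at the first stage, $C(S)=x$ reveals nothing about $xPy$, and your derivation of $C(S')\neq y$ collapses. The paper handles exactly this case by a separate argument: it supposes $C(S')=y$, uses the already-established WCC$^*$ to locate an alternative $z$ such that the resulting reversal is \emph{due to} $z$, and then invokes the reversal lemma (a weak reversal forces $xRy$, a strong one forces $z\succ_R x$, etc.) to manufacture a $\succ_R$-relation contradicting the hypothesis that $x\notin\min(S',\tc)$ or that $y\notin\min(S,\tc)$. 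Some version of that reversal-based case analysis is unavoidable.

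Separately, your WCC$^*$ argument is only a statement of intent. The difficulty is not just that deleting $z$ may push $x$ into $\min(\cdot,R)$; deletion can also pull a previously rejected alternative \emph{out} of the minimal set, and that alternative may then beat $x$ under $P$. The paper's proof is a genuine counting argument: it assumes $C(S\setminus\{x_i\})=c_i\notin\{x,y\}$ for every $i$, shows each $c_i$ must be isolated in $S\setminus\{x_i\}$ with $x_iRc_i$ and $c_i\in\min(S,R)$, and derives a pigeonhole contradiction because some $c_i$ would then have to $R$-dominate an element of $S$ and hence fail to be minimal in $S$. You would need to supply an argument of comparable substance before (A2) can be considered proved.
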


\begin{proof}
Let $(R,P)$ be a representation of the choice function $C$ where $R$ is a partial order and $P$ is a complete rationale. We will use the following three observations and a lemma to prove necessity:
		
\begin{enumerate}
\item  $x R y$ implies $x \succ_c y$. Also, $\succ_c \ \subseteq R\cup P$
			
\item  If $x \notin \min(S,R)$, then either $x$ is \textit{isolated} in $S$ with respect of $R$ ($(x,a) , (a,x) \notin R$  for all $a \in S$) or there exists a $b \in S$ such that $xRb$ holds
			
\item If $x \in \min(S_{i},R)$ for all $i \in [n]$, then $x \in \min(\bigcup_{i} S_{i},R)$ 
			
\end{enumerate}
Now we use the above observations to prove an intermediate result.
		
\begin{lemma} \label{Rev} If $C$ is CBR representable, then the following is true:
\begin{itemize} 
	\item If there is a weak $(xy)$ reversal due to $z$, then $xRy$, $yPx$ and $yRz$
	\item If there is a strong $(xy)$ reversal due to $z$ , then $\neg xRy$, $xPy$, $zRx$ and $yPz$ 
\end{itemize}
\end{lemma}	
\begin{proof}
Let there be a weak $(xy)$ reversal due to some $z$. By observation (1), $ \neg yRx$ and $ x \succ_c y$ implies that $xRy$ or $xPy$.  Suppose $xPy$ holds. Since $C(S)=x$ and $C(S \cup \{z\})=y$, it must be that $x \in \min(S \cup \{z\},R)$ and $x \notin \min (S,R)$. Therefore, we must have $zRx$, contradicting $x \succ_c z$. Thus $xRy$ holds and $x \notin \min(S \cup \{z\},R)$ implying $yPx$. For $C(S)= x$, it must be that $y \in \min(S,R)$ and for $C(S \cup \{z\}) =y$,  $\ yRz$  must be true.  \\
Now, let us consider the case of a strong $(xy)$ reversal due to $z$. If $xRy$ holds, then by the argument above, $yPx$ and $yRz$ holds. By transitivity of $R$, $xRz$ holds which contradicts $z \succ_c x$. Therefore $xPy$ holds and $x$ and $y$ are not related with respect to $R$. For $C(S \cup \{z\})=y$, it must be that $x \in \min(S\cup \{z\},R) $ and therefore for $C(S)=x$, it must be isolated in $S$ with respect to $R$. By an analogous argument in the case above, $zRx$ and $yPz$ hold. 	
\end{proof}

We can see that the following result immediately follows from the lemma above.
\begin{corollary} \label{Cor}
	If $C$ is CBR representable, then $x \ \succ_R \ y \implies xRy$
\end{corollary}	

Now we establish necessity of the axioms
\begin{itemize}

\item[(i)]\textit{NC}: 

For any $S$ with $C(S)=x$, it must be that $x \notin \min(S,R)$. Therefore, either $xRz$ holds for some $z \in S$ or $x$ is isolated in $S$ with respect to $R$. If $xRz$ holds, then we know that $C(xz)=x$. If $x$ is isolated in $S$ with respect to $R$, then we must have at least one $z \in S$ such that $z \notin \min(S,R)$. Therefore we must have $xPz$. Since $x$ and $z$ are unrelated in $S$, we get $C(xz)= x$. 
			
\item[(ii)] \textit{WCC$^*$} : \\
Let $S=\{x,y,x_{1},x_{2},....,x_{n}\}$ and $C(xy)=x$. Define a general set $S_{i}$ which has alternative $x_{i}$ missing from set $S$ i.e. $$S_{i}=S\setminus\{x_{i}\}$$
Assume for contradiction that $C(S_{i}) \notin \{x,y\}$ for all $i \in \{1,2,...,n\}$. Hence, $C(S_{i})$ is one of the $x_{j}$ where $i \neq j$. We denote by $c_{i}$ as the choice in set $S_{i}$.
			
Consider the first case where $C(S)=x$.\\
If $xRy$ then $x \notin$ $\min(S_{i},R)$ for all $i$. For $c_{i}$ to be chosen in $S_{i}$, $c_{i}Px$ must hold for all $i$. Note that for $C(S)=x$, it must be that $c_{i} \in \min(S,R)$ for all $i$, which is possible when $c_{i}$ is isolated in $S_i$ with respect to $R$ and $x_{i}Rc_{i}$ for all $i$. But, for every $i$, there exists a $j \neq i$ such that $c_{i}=x_{j}$, implying that there exists at least one $c_{i} \notin$ $\min(S,R)$ which is a contradiction.\\
Now, let $\neg xRy$ and thus $xPy$ hold. As $x \notin$ $\min(S,R)$, by observation (3), $x \in$ $\min(S_{i},R)$ in at most one $S_{i}$. If $ x \notin \min(S_{i},R)$ for all $i$, then argument becomes similar to the case above where $xRy$ holds. Assume $x \in$ $\min(S_{n},R)$ ($i=n$ W.L.O.G). For $c_{i}$ to be chosen in $S_{i}$ ($i \neq n$), $c_{i}Px$ holds and for $x$ to be chosen in $S$, $c_{i} \in$ $\min(S,R)$ for all $i \neq n$, for which $c_{i}$ is isolated in $S_i$ and $x_{i}Rc_{i}$ for all $i$. This restricts $c_{i}= x_{n}$ for all $i \neq n$. Also, given $x\in \min(S_{n},R)$, for $x\notin \min(S,R)$, we need $xRx_{n}$. As $c_{i}= x_{n}$, there exists a $z \in S \setminus \{x\}$ such that $x_{n}Rz$ holds which implies $x_{n} \notin \min(S,R)$, again a contradiction.
			
Let us now consider the case $C(S)=y$. Now we have $(xy)$ reversal.
Suppose $xRy$ is true. $yPx$ holds as $x \notin \min(S,R)$ and there exists a $x_{k} \in S$ such that $yRx_{k}$ holds. As $y \notin \min(S_k,R) \text{ for all } k \neq i$, choice of $c_{i}$ in $S_{i}$ requires $c_{i}Py$. Further, as $y$ is chosen in $S$, $c_{i} \in \min(S,R)$  for all $i \neq k$. By arguments above, this requires $x_{i}Rc_{i}$ and $c_{i}= x_{k}$ for all $i \neq k$. For $x_{k}$ to be chosen in $S_{i}$, there needs to be an alternative that is dominated by $x_k$ with respect to $R$, which implies $x_{k} \notin \min(S,R)$, a contradiction.\\
Assuming $\neg xRy$, $xPy$ must hold by observation (1). Choice of $y$ in $S$ requires $x \in \min(S,R)$ i.e. there exists a $x_{k}$ (say $x_{n}$) such that $x_{n}Rx$ holds and for no alternative $z$, $xRz$ is true. By observation (3), it must be that $y \in \min(S_{i},R)$ for atmost one $S_{i}$ (say $S_{k}$). Using arguments mentioned above, $c_{i}Py$ and $x_{i}Rc_{i}$ holds for all $i \neq k$ which restricts $c_{i}= x_{k}$ for all $i \neq k$. For $x_{k}$ to be chosen, there exists an alternative below it in $R$, a contradiction.

\item[(iii)] \textit{NBC$^*$}:\\
This follows from corollary \ref{Cor}
			
\item[(iv)]\textit{R-WARP}: \\
Consider $\{x,y\} \subseteq S,S' \in \mathcal{P}(X)$ and $y \in X$ such that the following is true:	
$$y \notin \min(S, \tc),\ C(S)=x ,\ \text{and} \ x \notin \min(S', \tc)$$	
Consider the case when $y \ \tc \ z$ for some $z \in S$. Then by corollary \ref{Cor}, $y \notin \min(S,R)$. As $C(S)=x$, $xPy$ holds. Now, if $x \ \tc \ w$ for some $w \in S'$, then $x \notin \min(S',R)$. This implies $C(S') \neq y$. Now suppose, $\neg x \ \tc \  w$ for any $w \in S'$. For $C(S')=y$, we need $x \in \min(S',R)$. Suppose that $C(xy)=x$. Using the argument in Proposition \ref{necessity} part $(iii)$, there exists a $z \in S'$, such that there is a strong $(xy)$ reversal due to $z$ as weak reversal implies $xRy$ (lemma \ref{Rev}). By definition $z \succ_R x$ holds, which is a contradiction as for $x \notin \min(S',\tc)$, we need $x \tc w$ for some $w \in S'$. If $C(xy)=y$, then by similar argument, there exists a $w \in S$ such that there is a strong/weak $(yx)$ reversal due to some $w \in S$. If the reversal is weak, then $y \succ_R x$ holds. For $x \notin \min(S',\tc)$, there is a $w'\in S'$ such that $x \ \tc \ w'$ holds, a contradiction. If the reversal is strong, then $yPx$ holds, again a contradiction.

Now consider the case when $y$ is not related to any alternative in $S$ with respect to $\succ_R$. If $x \ \tc \ z$ holds for some $z \in S'$, then the case is similar to the case above when $y \ \tc \ z$ and $\neg x \ \tc \ w$ for any $w \in S'$. Hence, consider the case when $y$ is not related to any alternative in $S'$ with respect to $\succ_R$. W.L.O.G, $C(xy)=x$. We then have a $(xy)$ reversal from $\{x,y\}$ to $S'$. As argues above, there exits a $z\in S$ such that reversal is due to $z$. If the reversal is a weak reversal, then $x \ \tc \ y$ holds which contradicts that $x$ is isolated in $S'$. If the reversal is a strong reversal, then $z \ \tc \ x$ holds, which again is a contradiction to $x$ is islolated in $S'$ with respect to $\tc$. \end{itemize}
	\end{proof}
	Next, we prove the sufficiency part of the proof. Before that we prove some lemmas

		\begin{lemma}
		\label{SR1}
		If $C$ satisfies (A1)-(A4), then a strong  $(xy)$ reversal implies $\neg x \ \tc \ y$
	\end{lemma}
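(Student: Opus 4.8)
The plan is to argue by contradiction: assume $C$ satisfies (A1)--(A4) and that, simultaneously, there is a strong $(xy)$ reversal while $x \tc y$. First I would unpack the reversal: by definition there are an alternative $z$ and a menu $S$ with $\{x,y\}\subseteq S$, $z\notin S$, such that $C(xy)=C(S)=x$, $C(S\cup\{z\})=y$ and $z\succ_c x$. Reading this event through the definition of $\succ_R$, the only edge it directly contributes is $z\succ_R x$ (via the strong-reversal clause). Two facts I would record at once. First, by Lemma \ref{corr} (Exclusivity) the presence of a strong $(xy)$ reversal precludes any weak $(xy)$ reversal, so the clauses that would yield $x\succ_R y$ out of a weak $(xy)$ reversal are unavailable. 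Second, from NBC$^*$ together with the asymmetry of $\succ_c$, the relation $\tc$ is acyclic: a $\tc$-cycle would unfold into a $\succ_R$-chain returning to its start, forcing $a\succ_c a$ by NBC$^*$, which is impossible; in particular $x\tc y$ already gives $\neg\, y\tc x$.

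The engine of the contradiction would be R-WARP applied ``in the direction of $y$''. Since $x\tc y$ and $y\in S\cup\{z\}$, the alternative $x$ $\tc$-dominates an element of $S\cup\{z\}$, so $x\notin\min(S\cup\{z\},\tc)$. Now instantiate R-WARP with its ``chosen'' alternative taken to be $y$ and its other alternative taken to be $x$, using the menu $S\cup\{z\}$ (on which $y$ is chosen and $x\notin\min(S\cup\{z\},\tc)$) and the target menu $S$: the axiom then yields $y\notin\min(S,\tc)\Rightarrow C(S)\neq x$. As $C(S)=x$, this is the desired contradiction, provided I can establish $y\notin\min(S,\tc)$. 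Thus the whole lemma reduces to showing that, under the standing assumptions, $y$ is not minimal in $S$ with respect to $\tc$; equivalently, since $x\tc y$ already makes $y$ non-isolated, that $y$ itself $\tc$-dominates some element of $S$.

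The main obstacle is exactly this minimality step: the chain witnessing $x\tc y$ places $y$ at the bottom, where it looks like a $\tc$-sink, so dominance by $y$ is not handed to us for free. My plan to overcome it is to take a shortest chain $x=a_0\succ_R a_1\succ_R\cdots\succ_R a_k=y$ and examine its last edge $a_{k-1}\succ_R y$ through the three defining clauses of $\succ_R$. In the weak-reversal clauses this edge comes bundled with a second edge $y\succ_R w$ (a weak $(AB)$ reversal due to $C$ always yields both $A\succ_R B$ and $B\succ_R C$), which already gives $y\tc w$ and hence non-minimality once the witness $w$ is checked to lie in $S$. The remaining strong-reversal case I would dispatch using the recorded edge $z\succ_R x$, which prepends to give $z\tc y$, together with acyclicity of $\tc$ and the pairwise backbone $z\succ_c x\succ_c y$ recovered from NC, so as to force the needed domination or else produce a forbidden $\tc$-cycle. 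If a witness falls outside $S$, I would re-route the R-WARP application to a menu that does contain it, or invoke R-WARP$^*$ (Lemma \ref{rwwarp}) to cap the number of $(x,y)$ reversals and contradict the configuration directly. Verifying that these witnesses can always be placed inside the relevant menu is the delicate bookkeeping I expect to absorb most of the effort.
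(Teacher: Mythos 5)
Your reduction is sound as far as it goes: instantiating R-WARP with the roles of $x$ and $y$ swapped (menu $S\cup\{z\}$, on which $y$ is chosen and $x\notin\min(S\cup\{z\},\tc)$ because $x\tc y$) correctly shows that the lemma would follow from $y\notin\min(S,\tc)$. The genuine gap is in the step you yourself flag as the ``main obstacle'': extracting a $\tc$-edge \emph{out of} $y$ from the last link $a_{k-1}\succ_R y$ of the chain witnessing $x\tc y$. Only one of the three clauses defining $\succ_R$ --- a weak $(a_{k-1}y)$ reversal due to some $w$ --- comes bundled with a second edge $y\succ_R w$. The other two clauses, namely a weak $(wa_{k-1})$ reversal due to $y$ and a strong $(yw)$ reversal due to $a_{k-1}$, produce the edge $a_{k-1}\succ_R y$ with $y$ sitting only on the receiving or ``due to'' side, and yield no edge of the form $y\succ_R\cdot$ at all. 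Your sketch treats ``the weak-reversal clauses'' as if both gave $y\succ_R w$, and for the strong clause offers only that acyclicity plus the backbone $z\succ_c x\succ_c y$ would ``force the needed domination or else produce a forbidden cycle''; but nothing there forces $y$ to dominate anything, and no cycle arises --- additional alternatives dominating $y$ is perfectly consistent. The fallback via R-WARP$^*$ also does not bite, since a single strong $(xy)$ reversal involves no forbidden third reversal in a nested chain of menus.

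The paper's proof goes the opposite way at exactly this juncture: it uses R-WARP to conclude $y\in\min(S,\tc)$ (the negation of what you are trying to establish) and then manufactures the contradiction from NC rather than from R-WARP. Concretely, $z\succ_R x$ and $x\tc y$ give $z\tc y$, so $C(zy)=z$ by NBC$^*$, producing a $(zy)$ reversal into $S\cup\{z\}$; by WCC$^*$ it is due to some $x_1$, and it cannot be weak (that would give $y\succ_R x_1$, contradicting the minimality of $y$), so it is strong and yields $x_1\succ_R z$, hence $x_1\tc y$, hence a fresh $(x_1y)$ reversal, and so on. NBC$^*$ keeps the $x_i$ distinct, so eventually every element of $S\cup\{z\}$ beats $y$ pairwise while $C(S\cup\{z\})=y$, violating NC. This iterative escalation terminating in a violation of NC is the idea your proposal is missing; without it, the claim $y\notin\min(S,\tc)$ cannot be established from the last edge of the chain alone.
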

	\begin{proof}
		Given there is a strong  $(xy)$ reversal on $S$ due to some $z$ , by the definition of $\succ_R$ , $z \succ_R x$ holds. If possible, $ x \ \tc \ y$. This implies $z \ \tc \ y$. Note that by R-WARP, $y \in \min(S, \tc)$. We now have a $(zy)$ reversal from $\{y,z\}$ to $S \cup \{z\}$. By WCC$^*$, there exists a $x_1 \in S$ such that the reversal is due to $x_1$. If it is a weak $(zy)$ reversal due to $x_1$, then $y \succ_R x_1$ holds, which is a contradiction. Therefore, it must be that we have strong $(zy)$ reversal due to $x_1$. By definition, $x_1 \succ_R z$ holds. This further implies $x_1 \ \tc \ y$. By NBC$^*$, $C(x_1y)=y$. Now, we have a $(x_1y)$ reversal due to some $x_2 \in S$. By similar argument as above, this must be a strong reversal, implying $x_2 \ \tc \ y$. This leads to $(x_2y)$ reversal due to some $x_3 \in S$. Proceeding inductively, this leads to $x_i \ \tc \ y$ for all $x_i \in (S \cup \{z\}) \setminus \{x\}$ as in each step, $x_{i+1} \neq x_k$, $k \le i$ by NBC$^*$. This violates NC as $x_i \succ_c y$ for all $x_i \in S \cup \{z\}$, a contradiction. 
	\end{proof}
	
	\begin{lemma}
		\label{SR}
		If $C$ satisfies (A1)-(A4), then a strong $(xy)$ reversal on set $S$ implies $ \neg x \ \tc \ w$ and $ \neg  w \ \tc \ x $ for all $w \in S$
	\end{lemma}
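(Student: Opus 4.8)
The plan is to reduce the statement to an "isolation" property of $x$ with respect to $\tc$ inside $S$, using the chain technique already deployed in Lemma \ref{SR1}. Throughout I fix the strong $(xy)$ reversal on $S$ due to $z$, so that $C(xy)=C(S)=x$, $C(S\cup\{z\})=y$ and $z\succ_c x$; by the definition of $\succ_R$ this gives $z\succ_R x$, hence $z\ \tc\ x$. I will freely use two facts coming from the axioms: $\tc\ \subseteq\ \succ_c$ (a consequence of NBC$^*$, so in particular $\tc$ is asymmetric), and $\neg x\ \tc\ y$ (Lemma \ref{SR1}). The whole argument splits into a preliminary claim and then the two required non-domination statements.

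First I would establish the claim that $y\notin\min(S,\tc)$. Suppose instead $y\in\min(S,\tc)$; then there is $a\in S$ with $a\ \tc\ y$ (so $C(ay)=a$) while $y$ dominates nothing in $S$. Since $C(S\cup\{z\})=y$, the choice switches from $a$ to $y$ as the menu grows from $\{a,y\}$ to $S\cup\{z\}$, and WCC$^*$ produces a menu along this path whose enlargement by a single element $b_1$ flips the choice, i.e.\ an $(ay)$ reversal due to $b_1$. If it is \textbf{weak}, then $y\succ_R b_1$, so $y\ \tc\ b_1$: for $b_1\in S$ this contradicts that $y$ dominates nothing in $S$, and for $b_1=z$ it gives $y\ \tc\ z\ \tc\ x$, hence $y\ \tc\ x$, contradicting $x\succ_c y$. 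If it is \textbf{strong}, then $b_1\succ_R a$, so $b_1\ \tc\ a\ \tc\ y$ and $C(b_1y)=b_1$, and the same step repeats with $b_1$ in place of $a$. The resulting chain $\dots\succ_R b_1\succ_R a$ has pairwise distinct elements by NBC$^*$ (a repeat would yield a $\succ_R$-cycle and hence some $p\succ_c p$). As every strong step adds a new element $\tc$-dominating $y$ and $X$ is finite, a \textbf{weak} step must eventually occur, giving a contradiction; hence $y\notin\min(S,\tc)$.

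Given this claim, the first non-domination statement is immediate and the second reduces to it. For $\neg x\ \tc\ w$: suppose $x\ \tc\ w$ for some $w\in S$. Then $x$ $\tc$-dominates $w\in S\subseteq S\cup\{z\}$, so $x\notin\min(S\cup\{z\},\tc)$; applying R-WARP to $C(S)=x$ (with $y\notin\min(S,\tc)$ from the claim) forces $C(S\cup\{z\})\neq y$, contradicting $C(S\cup\{z\})=y$. Thus $\neg x\ \tc\ w$ for all $w\in S$, so $x$ $\tc$-dominates nothing in $S$. For $\neg w\ \tc\ x$: suppose $w_0\ \tc\ x$ with $w_0\in S$; the subcase $w_0=y$ is ruled out by $x\succ_c y$ and asymmetry of $\tc$, so take $w_0\neq x,y$. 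Then $C(w_0x)=w_0$ while $C(S)=x$, and WCC$^*$ again yields a flipping element $c_1\in S$, an $(w_0x)$ reversal due to $c_1$. A \textbf{weak} one gives $x\succ_R c_1$, i.e.\ $x\ \tc\ c_1$ with $c_1\in S$, contradicting the statement just proved; a \textbf{strong} one gives $c_1\succ_R w_0$, whence $c_1\ \tc\ w_0\ \tc\ x$, $C(c_1x)=c_1$, and the step repeats. By NBC$^*$ the chain has distinct elements, so finiteness again forces a \textbf{weak} step and hence a contradiction. This gives $\neg w\ \tc\ x$ for all $w\in S$, completing the proof.

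The hard part will be the two chain arguments and not the R-WARP deduction: at each stage one must check that WCC$^*$ really delivers a single-element flip, read off the correct $\succ_R$-edge from the weak/strong classification (keeping straight which of the three clauses in the definition of $\succ_R$ applies), and invoke NBC$^*$ to guarantee the chain never revisits an alternative so that finiteness closes the loop. The weak/strong bookkeeping, together with the stray possibility that the flipping element equals $z$ rather than lying in $S$, is where the argument is most error-prone; once the claim $y\notin\min(S,\tc)$ is secured, the first statement is a one-line R-WARP application and the second is a direct reduction to the first.
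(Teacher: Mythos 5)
Your proof is correct, and its skeleton matches the paper's (R-WARP delivers $\neg x\ \tc\ w$; WCC$^*$ plus the weak/strong classification kills $w\ \tc\ x$), but you take two detours that the paper avoids. For the first statement, the paper applies R-WARP with the \emph{pair} $\{x,y\}$ as the first menu: Lemma \ref{SR1} gives $\neg x\ \tc\ y$ and NBC$^*$ gives $\neg y\ \tc\ x$, so $y\notin\min(\{x,y\},\tc)$ immediately, and since $C(xy)=x$ and $C(S\cup\{z\})=y$, R-WARP forces $x\in\min(S\cup\{z\},\tc)$, hence $\neg x\ \tc\ w$ for all $w\in S$. Your preliminary claim $y\notin\min(S,\tc)$, with its full chain argument, is therefore unnecessary — you already had everything you needed from Lemma \ref{SR1}, and choosing $S$ rather than $\{x,y\}$ as the first menu in R-WARP is what creates the extra work. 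For the second statement, the paper dispatches the strong case in one line: a strong $(w_0x)$ reversal would, by Lemma \ref{SR1} itself, imply $\neg w_0\ \tc\ x$, contradicting the standing assumption $w_0\ \tc\ x$; so the reversal must be weak and $x\succ_R c_1$ contradicts the first statement. Your iterated chain in the strong case reaches the same conclusion via finiteness and NBC$^*$, and the bookkeeping (including the $b_i=z$ subcase) checks out, but it re-proves work that Lemma \ref{SR1} already encapsulates. In short: nothing is wrong, but both halves admit the one-step shortcuts above.
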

	
	\begin{proof}
		Suppose a strong $(xy)$ reversal is observed on set $S$. By lemma \ref{SR1},  $\neg x \ \tc\ \ y$ and by NBC$^*$,  $\neg y \ \tc\ \ x$. Hence, $y \notin \min(\{x,y\}, \tc)$. By R-WARP, $x \in \min(S \cup \{z\}, \tc)$ which implies $\neg x \ \tc \ w$ for all $w \in S$. Therefore $ x \in \min(S, \tc)$. If possible for some $w \in S$, $w \ \tc \ x$ holds. Then by WCC$^*$ and lemma \ref{SR1} there is a weak $(wx)$ reversal due to some $w' \in S$. By definition, $x \succ_R w'$, contradicting $x \in \min(S, \tc)$.  
	\end{proof}

			\begin{lemma}
	\label{exclusivity}
	If $C$ satisfies (A1)-(A4) and there is a \textbf{weak} $(xy)$  reversal, then there does not exist $ y' \in X$ and a menu $S$ such that there is a:
	\begin{itemize}
		\item \textbf{strong} $(xy')$ reversal on $S \ni y$ ; or
		\item \textbf{strong} $(yy')$ reversal on $S \ni x$
	\end{itemize} 
	
\end{lemma}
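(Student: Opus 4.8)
The plan is to reduce both cases to the already-established Lemma \ref{SR}, using the single observation that a weak reversal forces a $\tc$-link between $x$ and $y$. First I would record the linchpin: the very existence of a weak $(xy)$ reversal places $x$ above $y$ in the reject-relation. Indeed, by the first clause in the definition of $\succ_R$, a weak $(xy)$ reversal due to any $w$ yields $x \succ_R y$, and since $\tc$ is the transitive closure of $\succ_R$ we immediately get $x \ \tc \ y$. Everything after this is contradiction-hunting against Lemma \ref{SR}, which is the statement that a strong reversal makes its "first" alternative $\tc$-incomparable to everything in the base menu.

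For the first bullet, I would suppose toward a contradiction that there is a \textbf{strong} $(xy')$ reversal on a menu $S$ with $y \in S$. Lemma \ref{SR} then forces $\neg x \ \tc \ w$ and $\neg w \ \tc \ x$ for every $w \in S$. Instantiating $w = y$, which is permissible since $y \in S$, gives $\neg x \ \tc \ y$, directly contradicting $x \ \tc \ y$. Hence no such strong $(xy')$ reversal can exist. For the second bullet, I would suppose there is a \textbf{strong} $(yy')$ reversal on a menu $S$ with $x \in S$. Here $y$ plays the role of the "first" alternative, so Lemma \ref{SR} gives $\neg y \ \tc \ w$ and $\neg w \ \tc \ y$ for every $w \in S$; taking $w = x$, valid since $x \in S$, yields in particular $\neg x \ \tc \ y$, again contradicting $x \ \tc \ y$. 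This rules out the second type of strong reversal as well.

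I do not expect a genuine obstacle here, precisely because Lemma \ref{SR} already does the heavy lifting: it is the step that converts a strong reversal into the incomparability statements under $\tc$, and its own proof rests on R-WARP, WCC$^*$, NBC$^*$ and NC. The only real content is recognizing that the two clauses of the statement are symmetric instances of the same clash — "a weak $(xy)$ reversal asserts $x \ \tc \ y$, whereas a strong reversal on a menu containing both $x$ and $y$ denies any $\tc$-link between them." The one point requiring care is bookkeeping about which alternative occupies the "first" slot of each strong reversal ($x$ in the first bullet, $y$ in the second), since that determines which half of Lemma \ref{SR}'s conclusion delivers $\neg x \ \tc \ y$.
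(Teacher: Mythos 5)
Your argument is correct and is essentially the paper's own proof, just spelled out more explicitly: the paper likewise notes that the weak $(xy)$ reversal gives $x \succ_R y$ (hence $x \ \tc \ y$) and then dismisses both bullets as violations of Lemma \ref{SR}. Your careful bookkeeping of which alternative sits in the ``first'' slot of each strong reversal matches what the paper leaves implicit.
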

\begin{proof}
	Let there be a weak $(xy)$ reversal due to some $z$. By definition, $x \succ_R y$ holds. In both the cases, this is a contradicton as this violates lemma \ref{SR}.
\end{proof}
Exclusivity mentioned in section \ref{axiomsdiscussion} is a direct implication of the lemma above.
	 
	We can see that if a choice function that is CBR representable has no strong reversal, then $C$ has no binary cycle in pairwise relation (satisfies \emph{No Binary cycle} condition). Next we show that our axioms imply \textit{Small Menu Property (SMP)}
	
	\begin{lemma}
		\label{Red}
		If $C$ satisfies (A1)-(A4), then it satisfies ``small menu property''
	\end{lemma}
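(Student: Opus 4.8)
The plan is to handle the weak and the strong cases separately. In each case I would first reconstruct the pairwise picture on $\{x,y,z\}$ from the $\succ_R$-relations that the reversal forces together with the sign condition ($x\succ_c z$ for weak, $z\succ_c x$ for strong), then use \textbf{NC} to cut the possible values of $C(xyz)$ down to at most two, and finally eliminate the survivors with \textbf{R-WARP} and Lemmas \ref{SR1}, \ref{SR}. Throughout I will use that $\succ_R\subseteq\succ_c$ and, via \textbf{NBC$^*$}, that $\tc\subseteq\succ_c$ (so every $\tc$-edge is a pairwise-choice edge), and I will use repeatedly that $z\succ_R x\Rightarrow z\tc x$ since $\succ_R\subseteq\tc$.

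For a weak $(xy)$ reversal due to $z$, the definition of $\succ_R$ gives both $x\succ_R y$ (first clause) and $y\succ_R z$ (second clause, with $w=x$), hence $x\succ_c y$ and $y\succ_c z$; with $x\succ_c z$ this yields $x\succ_c y\succ_c z$. Then $z$ loses both of its pairs inside the triple, so \textbf{NC} forces $C(xyz)\in\{x,y\}$. To exclude $C(xyz)=x$ I would apply \textbf{R-WARP} to the pair $(x,y)$ with first menu $\{x,y,z\}$ and second menu $S\cup\{z\}$: from $y\tc z$ we get $y\notin\min(\{x,y,z\},\tc)$, and from $x\tc y$ with $y\in S\cup\{z\}$ we get $x\notin\min(S\cup\{z\},\tc)$, so R-WARP yields $C(S\cup\{z\})\neq y$, contradicting the reversal. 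Hence $C(xyz)=y$.

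For a strong $(xy)$ reversal due to $z$ the definition gives $z\succ_R x$, so $z\tc x$, while Lemma \ref{SR} gives $\neg x\tc w$ and $\neg w\tc x$ for all $w\in S$, in particular $\neg x\tc y$ and $\neg y\tc x$. I would split on $C(yz)$. If $y\succ_c z$ we are in the cycle $x\succ_c y\succ_c z\succ_c x$, so \textbf{NC} leaves all three candidates: $C(xyz)=z$ is excluded by R-WARP on $(z,y)$ (here $\neg z\tc y$ by $\tc\subseteq\succ_c$, so $y$ is undominated and $y\notin\min$, while $z\notin\min(S\cup\{z\},\tc)$ since $z\tc x$), and $C(xyz)=x$ is excluded because it would create a \emph{strong} $(zx)$ reversal due to $y$ (as $z\succ_c x$ and $y\succ_c z$), whose consequence $\neg z\tc x$ from Lemma \ref{SR} contradicts $z\tc x$; thus $C(xyz)=y$. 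If instead $z\succ_c y$ we have the linear order $z\succ_c x\succ_c y$, \textbf{NC} gives $C(xyz)\in\{x,z\}$, and $C(xyz)=x$ is excluded just as above except that the induced $(zx)$ reversal due to $y$ is now \emph{weak} and yields $x\succ_R y$, i.e. $x\tc y$, contradicting $\neg x\tc y$. Hence $C(xyz)=z$; applying R-WARP to $(z,y)$ on $\{x,y,z\}$ and $S\cup\{z\}$ and using $z\notin\min(S\cup\{z\},\tc)$ then forces $y\in\min(\{x,y,z\},\tc)$, which (since $\neg x\tc y$) gives $z\tc y$ and shows $y$ dominates nothing inside the triple.

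The main obstacle is the rest of the linear subcase: producing the witness $w$ with $C(xyw)=x$ and $C(xyzw)=y$. The difficulty is that $C(S\cup\{z\})=y$ may live on a large menu while $y$ is $\tc$-dominated by $z$ but dominates nothing in $\{x,y,z\}$, so I must localize the reversal to a four-element menu. My plan is to take $S$ \emph{minimal} by inclusion among carriers of a strong $(xy)$ reversal due to $z$ with $z\succ_c y$, and to prove $|S|=3$; then $S=\{x,y,w\}$ gives $C(xyw)=C(S)=x$ and $C(xyzw)=C(S\cup\{z\})=y$ at once (note $|S|=2$ is already impossible, since it would force $C(xyz)=y$ against $C(xyz)=z$). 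To force $|S|=3$ I would descend from $S\cup\{z\}$ toward $\{x,y\}$ with \textbf{WCC$^*$}, extract an interior $(xy)$ reversal due to some $v$, use \emph{Exclusivity} (Lemma \ref{exclusivity}) to conclude it is strong, and then argue that $v=z$ produces a strictly smaller carrier while $v\neq z$ together with $x$ being $\tc$-isolated from $S$ (Lemma \ref{SR}) and the two-reversal bound \textbf{R-WARP$^*$} (Lemma \ref{rwwarp}) yields a contradiction with minimality. The delicate point I expect to absorb most of the work is arranging the descent so that both $z$ and the flipping element stay in the menu; a naive choice of an arbitrary $w$ with $y\tc w$ does not suffice, because without controlling what $w$ dominates one cannot rule out $C(\{x,y,z,w\})=w$.
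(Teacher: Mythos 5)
Your treatment of the weak case and of the triple-level conclusions in the strong case is correct and essentially the paper's argument: $\succ_R$ plus NBC$^*$ pins down the pairwise picture, NC prunes the candidates, and R-WARP together with Lemmas \ref{SR1} and \ref{SR} eliminates the rest (your exclusion of $C(xyz)=x$ via the induced $(zx)$ reversal due to $y$, weak in one subcase and strong in the other, is exactly the mechanism the paper uses, and your derivation of $z\tc y$ from $C(xyz)=z$ via R-WARP matches as well).

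The genuine gap is the four-element witness in the linear subcase, which you leave as a plan rather than a proof, and the plan as sketched does not close. Your descent from $S\cup\{z\}$ via WCC$^*$ combined with R-WARP$^*$ yields a chain with exactly one switch from $x$ to $y$; your own Lemma \ref{SR} argument then forces the switching element to be $z$ itself, but this is fully compatible with the chain building up \emph{all} of $S$ first and adding $z$ last, i.e.\ with $T=S$. Minimality of the carrier therefore produces no contradiction, and $|S|=3$ is not established. The paper sidesteps carriers entirely: since $C(yz)=z$ while $C(S\cup\{z\})=y$, WCC$^*$ yields a $(zy)$ reversal due to some $w\in S$, which must be \emph{weak} (a strong one would give $w\tc x$, violating Lemma \ref{SR}), hence $y\succ_R w$ and so $y\tc w$, $z\tc w$. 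This is precisely the control over $w$ that you correctly identify as missing from a "naive choice of $w$": with $y\tc w$ in hand, $C(xyw)=x$ follows because R-WARP kills $y$ (as $x$ is $\tc$-isolated in $\{x,y,w\}$) and a choice of $w$ would force $x\succ_R y$ against Lemma \ref{SR1}; and $C(xyzw)=y$ follows because R-WARP kills $z$ (as $y\notin\min(\{x,y,z,w\},\tc)$), a choice of $x$ would give a weak $(zx)$ reversal due to $w$ and hence $x\succ_R w$ against Lemma \ref{SR}, and a choice of $w$ would give a $(yw)$ reversal due to $x$ or $z$, both of which are excluded by Lemma \ref{SR} and Exclusivity respectively. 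You should replace the minimal-carrier step with this direct construction of $w$ from the forced weak $(zy)$ reversal.
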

	\begin{proof}
		Suppose for some $x,y \in X$ we have a weak $(xy)$ reversal due to some $z$. By definition $x \ \succ_R \ y, y \  \succ_R \ z $ and hence $x \ \tc \ z$. By NBC$^*$, $x \succ_c y \succ_c z $ and , $C(xyz) \neq z$ due to NC. Now, by R-WARP, $C(xyz) \neq x$. Therefore we have a weak $(xy)$ reversal from pair to triple. \\		
		Suppose for some $x,y \in X$ we have a strong $(xy)$ reversal due to some $z$ on some set $S$. By definition $z \succ_R x$. We know that $\neg y \ \tc \ z $ (as $y \ \tc \ z$  would imply $ y \  \tc \ x$, violating NBC$^*$). If $C(xyz)= x$ , then we have a $(zx)$ reversal due to $y$. It cannot be a strong reversal as it would imply $y \ \tc \ z$. Therefore, it must be a weak $(zx)$ reversal due to $y$ which implies $x \ \tc \ y$, a contradiction to lemma \ref{SR1}. If $C(xyz) = y$, then by NC, $C(yz)=y$ and we are done. 
		
		Now, let us consider $C(xyz) = z$. Since $C(S \cup \{z\})=y$ and $z \notin \min(S \cup \{z\}, \tc)$, thus by R-WARP, $ y \in \min(\{x,y,z\}, \tc ) $. This implies $z \ \tc \ y$ as $\neg x \ \tc \ y$ by lemma  \ref{SR1}. By NBC$^*$, $C(yz)=z$. Since $C(S \cup \{z\})=y$, by WCC$^*$ there exists a $w \in S$ such that the $(zy)$ reversal is due to $w$. This reversal is a weak reversal by lemma \ref{corr} and \ref{SR} and hence $y \ \tc \ w$. This implies $z \ \tc \ w$ (and $C(zw)=z$ by NBC$^*$). Now, let us show that $C(xyw)=x$. By R-WARP, $C(xyw) \neq y$ since $x \notin \min(\{x,y,w\}, \tc)$. If $C(xyw)=w$ then by lemma \ref{corr} and \ref{SR1}, we have a weak $(yw)$ reversal due to $x$ implying $y \ \tc \ x$ , a contradiction. Now, we show $C(xyzw)= y$. By R-WARP, $C(xyzw) \neq z$ as $y \notin \min(xyzw, \tc)$. Also, if $C(xyzw)=x$, then we have a weak $(zx)$ reversal due to $w$. By definition, $x \succ_R w$, violating lemma \ref{SR}. If $C(xyzw)= w$, then by WCC$^*$, the $(yw)$ reversal is either due to $x$ or $z$. It is not due to $x$ by lemma \ref{SR}. If the reversal is due to $z$ then this is a weak reversal, contradicting $C(zy) = z$.                   
	\end{proof}
	\bigbreak
	We are now equipped to prove the sufficiency of the axioms.
	
	\begin{proposition}\label{prop3}
		If $C$ satisfies A(1)-A(4), then it is CBR representable
	\end{proposition}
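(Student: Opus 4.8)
The plan is to build a representation $(R,P)$ explicitly from the data and verify it directly. I would take the first rationale to be $R=\tc$, the transitive closure of $\succ_R$. By construction $R$ is transitive, and it is asymmetric: a cycle $a\tc a$ would correspond to a chain $a\succ_R\cdots\succ_R a$, and \textbf{NBC$^*$} would then force $a\succ_c a$, which is impossible. Hence $R$ is a strict partial order, as required. For the second rationale, let $\hat{P}_R$ be the relation where $x\hat{P}_R y$ holds iff there is a menu $S$ with $y\in S$, $C(S)=x$ and $y\notin\min(S,\tc)$, and let $P$ be any extension of $\hat{P}_R$ to a complete asymmetric relation. Such an extension exists provided $\hat{P}_R$ is asymmetric, which is exactly where \textbf{R-WARP} enters: if $x\hat{P}_R y$ and $y\hat{P}_R x$ held via menus $S,S'$, then $y\notin\min(S,\tc)$, $C(S)=x$ and $x\notin\min(S',\tc)$ would give $C(S')\neq y$, contradicting $C(S')=y$. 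Note also that on any pair $\{x,y\}$ unrelated by $R$ we have $\min(\{x,y\},\tc)=\emptyset$, so $C(xy)=x$ already yields $x\hat{P}_R y$; thus $P$ agrees with $\succ_c$ on such pairs.

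With $(R,P)$ fixed, I must show $C(S)=\max\big(S\setminus\min(S,\tc),P\big)$ for every menu $S$. One direction is immediate from the construction of $P$: writing $x=C(S)$, every surviving $y\in\big(S\setminus\min(S,\tc)\big)\setminus\{x\}$ satisfies $x\hat{P}_R y$ by definition, hence $xPy$. So $x$ $P$-dominates all other survivors and is therefore the unique $P$-maximum of $S\setminus\min(S,\tc)$ --- \emph{provided} $x$ is itself a survivor.

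This last proviso, namely $C(S)\notin\min(S,\tc)$, is the crux of the argument and the step I expect to be the main obstacle. I would argue by contradiction: suppose $C(S)=x\in\min(S,\tc)$, so there is $z\in S$ with $z\tc x$ while $x\tc z'$ for no $z'\in S$. By \textbf{NBC$^*$}, $z\succ_c x$, i.e. $C(xz)=z$. Starting from $S$ and repeatedly invoking \textbf{WCC$^*$} to remove one alternative at a time, I obtain an increasing chain $\{x,z\}=T_0\subset T_1\subset\cdots\subset T_m=S$ with $C(T_i)\in\{x,z\}$ at every step. Since $C(T_0)=z$ while $C(T_m)=x$, the choice flips at some step, producing a $(zx)$ reversal due to an alternative $v=T_{i+1}\setminus T_i\in S$. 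If this reversal is \textbf{weak}, then by the two relevant clauses of the definition of $\succ_R$ it yields both $z\succ_R x$ and $x\succ_R v$, whence $x\tc v$ with $v\in S$, contradicting $x\in\min(S,\tc)$. If it is \textbf{strong}, then Lemma~\ref{SR1} gives $\neg\,z\tc x$, contradicting $z\tc x$. Either way we reach a contradiction, so $C(S)$ always survives the first stage.

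Combining the pieces, $x=C(S)$ is a survivor that $P$-dominates every other survivor, so $C(S)=\max(S\setminus\min(S,\tc),P)$ and $(R,P)$ is a CBR representation. The delicate points are the asymmetry of $\hat{P}_R$ (precisely where \textbf{R-WARP} is needed, and where one must read $\hat{P}_R$ as ranging over survivors $y\in S$) and the survival claim, whose proof hinges on converting the WCC$^*$ path into a single $(zx)$ reversal and then splitting on weak versus strong via the $\succ_R$-definition and Lemma~\ref{SR1}. Notably this route avoids an induction on $|S|$ and does not require the small-menu analysis of Lemma~\ref{Red}.
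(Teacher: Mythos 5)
Your construction and verification mirror the paper's own proof of Proposition~\ref{prop3}: you take $R=\tc$ (asymmetric by NBC$^*$), complete $\hat{P}_R$ (the paper's $P_1$) to a complete asymmetric $P$ with R-WARP supplying asymmetry, and establish that $C(S)$ survives the first stage via a WCC$^*$ path from a pair into $S$ followed by the weak/strong reversal dichotomy, which is exactly the paper's argument. The only deviations are harmless shortcuts: you complete $P$ arbitrarily rather than via $R^c\setminus(P_1\cup P_1^{-1})$, and in the weak-reversal case you read $x\succ_R v$ directly off the second clause of the definition of $\succ_R$ instead of routing through the small-menu Lemma~\ref{Red} --- both steps are valid and, if anything, slightly cleaner.
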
 
	\begin{proof}  
		First, we define a partial order $R^c$ and a complete rationale $P^c$ on the choice function such that $$ C(S)= \max(S \setminus \min(S,R^c), P^c) $$ 
		
	Define $R^c \equiv \tc $. To define $P^c$, we first define $P_1$ as 
		\begin{center}
			$xP_1 y$ if and only if there exists a $S$ such that $C(S)=x$ and $y \notin \min(S, \tc)$
		\end{center}

Note that $ \succ_R$ is acyclic by NBC$^*$. Hence, $R^c$ is asymmetric and transitive. $P_1$ is asymmetric by R-WARP. Let $P^c$ be an asymmetric and complete rationale such that	$$P^c \equiv P_1 \cup P_2$$ where $P_2=R^c \setminus P_1 \cup P_1^{-1}$
	
	Let us now prove that $P^c$ is complete and asymmetric.\\
	
	\textbf{Completeness:} Let us assume that $C(xy)=x$. If $x$ and $y$ are not related with respect to $R^c$, this implies $xP_1y$ since $y \notin \min(xy, R^c)$. Now, if  $(xy) \notin P_1 \cup P_1^{-1}$, then it is related with respect to $R^c$. Therefore, $x$ and $y$ are related with respect to $P_2$

	\textbf{Asymmetry:} If possible, say for some $x,y$, both $xP^c y$ and $yP^c x$ is true. Either $x{P_1} y$ or $y{P_1} x$ is true otherwise both will be derived through $P_2$ which contradicts the asymmetry of $R^c$. W.L.O.G. suppose $x{P_1} y$ holds. Then $y{P_1}x$ cannot hold by R-WARP. It is easy to see that $y P_2 x$ also does not hold.
		
		Now, we show that the above defined $(R^c, P^c)$ rationalize the choices. 
		Consider a set $S$ and $C(S)=x$. Suppose that $x \in \min(S,R^c)$. Then there exists a $y \in S \setminus \{x\}$ such that $yR^cx$ holds and $ \neg xR^cy'$ for all $ y' \in S \setminus \{x\}$.  Note that by WCC$^*$, there exists a sequence of sets ordered by set inclusion from $\{x,y\}$ to $S$ with choices belonging to $\{x,y\}$. R-WARP ensures that there exists a $z \in S$ such that there is a $(yx)$ reversal due to $z$. Lemma \ref{corr} and \ref{SR} implies that it is a weak reversal. By Lemma \ref{Red}, $y \succ_c x \succ_c z$ and $C(xyz)=x$. Thus, we must have $x R^c z$, leading to a contradiction.  \\
		Now we show that $x = \max(S \setminus \min(S,R^c),P^c)$. Consider any $y$ such that $y \notin \min(S, R^c)$ and $ y P^c x$ holds. We know that by construction of $P^c$, we have $x {P_1}y \  (\implies xP^cy)$ which contradicts the asymmetry of $P^c$. Therefore $xP^cy$ for all $y \notin \min(S,R^c)$. 	
	\end{proof}
	
	Proposition \ref{necessity} and Proposition \ref{prop3} complete the proof of Theorem \ref{thm1} 	

	\subsection{Proof of Theorem 2}
	\begin{proposition} \label{prop4}
		If $C$ is Transitive CBR-representable, then $C$ satisfies A(1), A(2),(A3),(A4') 
	\end{proposition}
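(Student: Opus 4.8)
The plan is to reduce three of the four axioms to the characterization of the unrestricted model and to treat only R-SARP (A4') directly, exploiting that the second rationale is now a linear order rather than merely complete. Since a linear order is in particular complete, every Transitive-CBR representation $(R,P)$ is also a CBR representation; hence by Proposition \ref{necessity} the choice function $C$ already satisfies (A1)--(A4). In particular (A1), (A2) and (A3) need no further argument, and R-WARP (A4) is available for free. What remains is to upgrade R-WARP to R-SARP.

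First I would recast (A4') as an acyclicity statement. Define a relation by writing $x\succ_\circ y$ whenever some menu $S$ satisfies $\{x,y\}\subseteq S$, $C(S)=x$ and $y\notin\min(S,\tc)$. The chain hypotheses of R-SARP say precisely that $x_1\succ_\circ x_2,\ x_2\succ_\circ x_3,\ldots,x_{n-1}\succ_\circ x_n$, while a failure of its conclusion (that is, $C(S_n)=x_n$ with $x_1\notin\min(S_n,\tc)$) would give $x_n\succ_\circ x_1$. Thus (A4') is equivalent to the claim that $\succ_\circ$ admits no cycle through distinct alternatives. The length-two instance of this claim is exactly R-WARP, so $\succ_\circ$ is already asymmetric by the previous paragraph; the real work is to promote asymmetry to full acyclicity, and it is here that the transitivity of $P$ will be used.

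The key step I would establish is a dichotomy for the edges of $\succ_\circ$: if $x\succ_\circ y$ is witnessed by $S$, then $xPy$ or $xRy$. If $y$ is shortlisted in $S$, that is $y\notin\min(S,R)$, then since $C(S)=\max(S\setminus\min(S,R),P)=x$ and $y\neq x$, completeness and asymmetry of $P$ give $xPy$. If instead $y\in\min(S,R)$, then $y$ is $R$-dominated by some $a\in S$ and $R$-dominates nothing in $S$; because $\tc\subseteq R$ by Corollary \ref{Cor}, the hypothesis $y\notin\min(S,\tc)$ forces $y$ to be isolated under $\tc$ in $S$. I would then invoke the reversal analysis of Lemma \ref{Rev}: the relations $aRy$ and $C(S)=x$ either force $a=x$, giving $xRy$ outright, or, when $a\neq x$, yield a strong reversal that by definition places $a\succ_R y$ and hence $a\tc y$, contradicting the $\tc$-isolation of $y$. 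Carefully discharging the intermediate configurations of this last case---especially when the dominator $a$ is itself $R$-comparable to $x$---is one of the places the argument becomes delicate. Granting the dichotomy, $\succ_\circ\subseteq P\cup R$.

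It then remains to exclude cycles of $\succ_\circ$. A cycle consisting only of $P$-edges collapses, by transitivity of $P$, to $x_1Px_1$, impossible for a linear order; a cycle consisting only of $R$-edges is impossible because $R$ is a partial order. The remaining and decisive case is a cycle that mixes the two kinds of edges, and this is the step I expect to be the main obstacle. The intended resolution mirrors the length-two casework of R-WARP in Proposition \ref{necessity} but iterated along the chain: an edge $x\succ_\circ y$ that is an $R$-edge but not a $P$-edge forces, as above, $y$ to be $\tc$-isolated in its witnessing menu, and I would argue that such a $y$ can never be $\succ_\circ$-preferred to an alternative lying $\tc$-above the rest of the cycle, so the cycle cannot re-enter $y$. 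Transitivity of $P$ is what lets the $P$-stretches of the cycle be compressed, while Lemma \ref{Rev} and Corollary \ref{Cor} are what prevent the $R$-stretches from closing the loop; assembling these into a single contradiction is where the bulk of the casework lies.
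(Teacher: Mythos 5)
Your reduction of (A1)--(A3) (and of R-WARP itself) to Proposition \ref{necessity} is exactly what the paper does, and recasting R-SARP as acyclicity of your relation $\succ_\circ$ is a fair reformulation. But you part ways with the paper at the decisive step, and the route you choose has two genuine gaps. The paper's argument is much more direct: it claims that each link of the R-SARP chain ($C(S_i)=x_i$ with $x_{i+1}\notin\min(S_i,\tc)$) already yields $x_iPx_{i+1}$, so that transitivity of the now-linear $P$ gives $x_1Px_n$ and asymmetry rules out $C(S_n)=x_n$. In other words, every edge is a $P$-edge; there is no ``$P$ or $R$'' dichotomy to manage and no mixed cycle ever arises. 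By settling for the weaker conclusion $\succ_\circ\subseteq P\cup R$ you create the mixed-cycle problem for yourself, and since $R$ and $P$ in a Transitive-CBR representation need not cohere (e.g.\ $xRy$ does not imply $xPy$), $P\cup R$ can perfectly well contain cycles a priori; the vague claim that ``the cycle cannot re-enter $y$'' is not an argument, and you concede as much. As written, the proof is incomplete.

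There is also a flaw inside your dichotomy itself. In the case $y\in\min(S,R)$ with $R$-dominator $a\neq x$, you assert that $aRy$ together with $C(S)=x$ ``yields a strong reversal'' and hence $a\succ_R y$, contradicting the $\tc$-isolation of $y$. That inference runs Lemma \ref{Rev} backwards: the lemma derives representation facts ($zRx$, $yPx$, etc.) from \emph{observed} reversals, whereas $aRy$ is a fact about one particular representation and implies nothing about observed choice patterns ($R$ may strictly contain $\succ_R$, which is exactly why $y$ can be $\tc$-isolated yet $R$-minimal). So the second horn of the dichotomy is not established, and with it the containment $\succ_\circ\subseteq P\cup R$. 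To repair the argument along the paper's lines you would instead show, for each link, that the hypothesis $x_{i+1}\notin\min(S_i,\tc)$ forces $x_{i+1}$ to survive the first stage (handling the $\tc$-isolated case via the WCC$^*$/R-WARP reversal analysis as in the necessity proof of (A4)), conclude $x_iPx_{i+1}$ for every $i$, and finish with transitivity and asymmetry of $P$.
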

	\begin{proof}
		The necessity of \textit{WCC$^*$}, NC and \textit{NBC$^*$} is same as shown in Appendix A.1. Let us now prove the necessity of R-SARP. Suppose for some  $S_{1}, \ldots, S_{n} \in \mathcal{P}(X)$ and distinct $x_{1}, \ldots, x_{n} \in X$, we have:
$$	x_{i+1} \notin \min(S_i, \tc), \ C(S_{i})=x_{i} \text{ \ for \ }  i=1, \ldots, n-1, \text { and } x_{1} \notin \min(S_n, \tc\textit{}) 
		$$
		Using the argument in proving the necessity of R-WARP, we must have $x_iPx_{i+1}$ for all $i$. Since $P$ is transitive, we must have $x_1Px_n$. If $C(S_n) =x_n$, by a similar argument, it would imply $x_nPx_1$ a contradiction (since $P$ is asymmetric)
	\end{proof}
	Note that all the lemmas in the section above (Lemmas \ref{SR1} - \ref{Red}) hold true even when R-WARP is replaced by R-SARP. With this we prove the following result
	\begin{proposition} \label{prop5}
		If $C$ satisfies A(1), A(2),(A3),(A4'), then it is Transitive-CBR representable
	\end{proposition}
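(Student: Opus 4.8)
The plan is to reuse the sufficiency argument for Theorem \ref{thm1} (Proposition \ref{prop3}) essentially verbatim for the first rationale, keeping $R^c \equiv \tc$, and to strengthen only the construction of the second rationale so that it becomes a linear order. The first thing I would note is that R-SARP (A4$'$) reduces to R-WARP when the chain has length two: taking $n=2$, $S=S_1$, $S'=S_2$, $x=x_1$, $y=x_2$ recovers exactly (A4). Hence R-SARP implies R-WARP, and every statement proved under R-WARP carries over — in particular Lemmas \ref{SR1}--\ref{Red}, as already remarked. Consequently $R^c=\tc$ is asymmetric and transitive (by NBC$^*$), and the min-set identification of Proposition \ref{prop3} still goes through: if $C(S)=x$ then $x\notin\min(S,R^c)$, so $R^c$ correctly determines the rejected alternatives.

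The genuinely new ingredient is producing a \emph{transitive} complete second rationale. I would again set $xP_1y$ iff there is a menu $S$ with $C(S)=x$ and $y\notin\min(S,\tc)$. Where the CBR proof only needed asymmetry of $P_1$, here I would show that R-SARP forces $P_1$ to be acyclic. Suppose, for contradiction, there were a $P_1$-cycle $x_1P_1x_2\,P_1\cdots P_1\,x_mP_1x_1$ on distinct alternatives. Each link $x_iP_1x_{i+1}$ supplies a witness $S_i$ with $C(S_i)=x_i$ and $x_{i+1}\notin\min(S_i,\tc)$, and the closing link supplies $S_m$ with $C(S_m)=x_m$ and $x_1\notin\min(S_m,\tc)$. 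Feeding $S_1,\dots,S_m$ into R-SARP yields $C(S_m)\neq x_m$, contradicting $C(S_m)=x_m$. Thus $P_1$ is acyclic, so its transitive closure $\overline{P_1}$ is a strict partial order.

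Given acyclicity, I would invoke Szpilrajn's extension theorem to extend $\overline{P_1}$ to a linear order $P\supseteq P_1$; as a linear order, $P$ is complete, asymmetric and transitive, exactly the requirement for Transitive-CBR. Finally I would verify that $(R^c,P)$ represents $C$ in parallel with Proposition \ref{prop3}: fix $S$ with $C(S)=x$. For any $y\in (S\setminus\min(S,R^c))\setminus\{x\}$, the menu $S$ itself witnesses $xP_1y$ (since $C(S)=x$ and $y\notin\min(S,R^c)$), whence $xPy$. Since also $x\notin\min(S,R^c)$, the alternative $x$ lies in the shortlist and $P$-dominates every other shortlisted alternative, so $x=\max(S\setminus\min(S,R^c),P)$, i.e. $C(S)=\max(S\setminus\min(S,R^c),P)$.

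The main obstacle is the acyclicity step: one must check that an arbitrary $P_1$-cycle genuinely instantiates the hypotheses of R-SARP (distinct alternatives, the required non-minimality $x_{i+1}\notin\min(S_i,\tc)$ at each stage, and a valid closing condition) so that R-SARP delivers the contradiction. Everything downstream — the Szpilrajn extension and the representation check — is routine and mirrors Proposition \ref{prop3}.
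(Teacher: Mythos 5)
Your proposal is correct, and it reaches the same destination as the paper by a slightly different final step. Both arguments keep $R^c\equiv\tc$, reuse Lemmas \ref{corr}, \ref{SR} and \ref{Red} (which survive the replacement of R-WARP by R-SARP) to show $C(S)=x$ forces $x\notin\min(S,R^c)$, and both rest on the same key observation that R-SARP is exactly the condition that makes the revealed relation $P_1$ acyclic (in the paper this appears as the asymmetry argument for $\bar{P_1}=tc(P_1)$, which is equivalent to your acyclicity claim, and your instantiation of an $m$-cycle into the $n$-chain hypothesis of R-SARP is valid since a minimal cycle has distinct elements). Where you diverge is in completing $\bar{P_1}$ to a linear order: the paper constructs an explicit completion $P^c=\bar{P_1}\cup\hat{P_2}$ with $\hat{P_2}\equiv R^c\setminus(\bar{P_1}\cup\bar{P_1}^{-1})$ and then verifies transitivity by a case analysis on 3-cycles (using that a complete asymmetric relation is transitive iff it has no 3-cycles), whereas you invoke Szpilrajn's theorem to extend $\bar{P_1}$ to an arbitrary linear order. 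Your route is shorter and avoids the case analysis entirely, since the representation check only ever needs $P\supseteq P_1$ (the witnessing menu $S$ itself gives $xP_1y$ for every shortlisted $y\neq x$); the paper's explicit construction buys a concrete, canonical $P^c$ expressed in terms of the choice data, which is more in the spirit of its identification results in Section \ref{Iden}, but is not needed for bare existence. Interestingly, a commented-out passage in the paper's source also reaches for Szpilrajn, so the authors were aware of your shortcut.
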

	
	\begin{proof}	Define $R^c \equiv \tc$ and $P^c$ as
		\begin{equation}
		\label{PcTCBR}
		P^c \equiv \bar{P_1} \cup \hat{P_2}
		\end{equation}
	
		 where $xP_1 y$ if and only if there exists a $S$ such that $C(S)=x$ and $y \notin \min(S, \tc)$
 and $\bar{P_1} = tc(P_1)$. Also, $\hat{P_2} \equiv R^c \setminus(\bar{P_1} \cup \bar{P_1}^{-1})$
		 
		 $R^c$ is asymmetric and transitive as discussed above. Next we show that $P^c$ is a linear order.
		
		\textbf{Completeness:} Let us assume that $C(xy)=x$. If $x$ and $y$ are not related with respect to $R^c$, this implies $xP_1y$. Now, if  $(xy) \notin \bar{P_1} \cup \bar{P_1}^{-1}$, then it is related with respect to $R^c$. Therefore, $(xy)$ is related with respect to $\hat{P_2}$
				
		\textbf{Asymmetry:} If possible, say for some $x,y$, both $xP^c y$ and $yP^c x$ is true. Either $x\bar{P_1} y$ or $y\bar{P_1} x$ is true (else both will be derived through $R^c$ which is a contradiction). W.L.O.G suppose $x\bar{P_1} y$ holds. Then $y\bar{P_1}x$ cannot hold by R-SARP. Thus, it must be that $\neg y P^c x$ as it would mean $yR^cx$ and $x,y$ are not related with respect to $\bar{P_1}$.
		
		\textbf{Transitivity:} Assume for contradiction that $P^c$ is cyclic. Since $P^c$ is complete, we only need to consider a 3-cycle i.e. for some $x,y $ and $z$, $xP^cyP^czP^cx$. It is easy to see that at least one of the pair must be related in $\bar{P_1}$, thus the following cases are possible:
		\begin{itemize}
			\item $ x \bar{P_1}y$ and $y \bar{P_1}z$: This would imply $x \bar{P_1}z$ and  $\neg z\bar{P_1}x$ therefore $\neg zP^c x$.
			\item $ x \bar{P_1}y$   and $y \hat{P_2} z$: Since $zP^cx$ is true, following two cases are true: 
			\begin{itemize}
				\item $z \bar{P_1} x$: This would imply $z \bar{P_1}y$ and  $\neg y\bar{P_1}z$ therefore $zP^c y$, a contradiction to the asymmetry of $P_c$.
				\item $z \hat{P_2}x$ : By \textit{NBC$^*$}, we know that $y R^c x$. By NC and NBC$^*$, $C(xyz) \in \{y,z\}$. As $ y, z \notin \min(xyz, R^c)$,  we get $ (yz) \in \bar{P_1} \cup \bar{P_1}^{-1}$, a contradiction.

			\end{itemize}
		\end{itemize}

		Now, we show that the above defined $(R^c,P^c)$ rationalize the $C$. Consider a set $S$ and $C(S)=x$. Suppose $x \in \min(S,R^c)$. Then there exists $y \in S \setminus \{x\}$ such that $yR^cx$ holds and $ \neg xR^cy'$ for all $ y' \in S \setminus \{x\}$. Note that by WCC$^*$, there exists a sequence of sets ordered in set inclusion from $\{x,y\}$ to $S$ with choices belonging to $\{x,y\}$. R-SARP ensures that there exists a $z \in S$ that causes the $(yx)$ reversal. As argued above, Lemma \ref{corr} and \ref{SR} imply that the reversal is weak. By Lemma \ref{Red}, $y \succ_c x \succ_c z$ and $C(xyz)=x$. Thus, $x \ \tc \ z$, leading to a contradiction. 
	
	Now we show that $x = \max(S \setminus \min(S,R^c),P^c)$. Consider any such $y$ that $y \notin \min(S, \tc)$ and $ y P^c x$ holds. We know that by construction of $P^c$, we have $x P_1 y (\implies x P^c y)$ holds  which contradicts the asymmetry of $P^c$. Therefore $x P^c y$ for all $y \notin \min(S, \tc)$.

\end{proof}	
Proposition \ref{prop4} and \ref{prop5} prove the sufficiency.	
	
\subsection{Proofs of results in Section \ref{axiomsdiscussion}} \label{proofsec4.2}
	\subsubsection{Proof of Lemma \ref{rwwarp}}

		Suppose for some $S,S'$ and $S''$ we have $C(S) = C\{x,y\} = x$ and $C(S') = y$. Note that we have a $(xy)$ reversal. By WCC$^*$ there exists a $z \in S$ that causes this reversal. If $C(xz)=x$ then it is a weak $(xy)$ reversal. By definition, $x \ \tc \ y $ and $ y \ \tc \ z$, implying $C(S'') \neq y$ by R-WARP. If $C(xz)=z$, this implies a strong $(xy)$ reversal. Also since $z \ \tc \ x$ there is a weak $(zx)$ reversal due to some $z' \in S$ (by WCC$^*$). Therefore $x \notin \min(S', \tc)$, implying $C(S') \neq y$ by R-WARP

	\subsection{Proofs of results in Section \ref{sec6}}	
\subsubsection{Proof of Lemma \ref{SMP}}

Follows from proposition \ref{prop3} and lemma \ref{Red}.

\subsubsection{Proof of Lemma \ref{Idencor}}

	Using the lemma \ref{SMP}, we can say that if we have either a weak or a strong reversal, then it will be reflected in the small menu reversals i.e. sets such that $|S| \le 3$. Now consider two choice functions $C$ and $\bar{C}$ with same choices in small menus, but different choice in at least one set $S$ where $|S| > 3$. W.L.O.G., let the choice in that set be $C(S)=x$ and $\bar{C}(S)=y$ where $x \neq y$. Let $C(xy)=x$. Hence, we have a $(xy)$ reversal in the choice function $\bar{C}$. We have argued before that any reversal in a CBR representable choice function can either be a weak or a strong reversal. By lemma \ref{Red}, this reversal will be reflected in the small menus (hence relations required will be common to both the representations). If the reversal is weak, then it will be reflected in small menus giving $yR^*z$ and $yP^*x$. Hence, $x$ cannot be chosen in a set containing $y$ and $z$ which is contradiction as $C(S)=x$. If the reversal is strong, then if we two possible cases: 
	
	(i) $x \succ_c y \succ_c z \succ_c x$ and $C(xyz)=y$. Here the $(xy)$ reversal is seen in a small menu giving $xP^*y$ and $\neg xR^*y$, $yP^*z$, $zR^*x$ . Note that $C(xyz)=y$ and $C(S)=x$ where $S \supset \{xyz\}$. As $C(xz)=z$, we have a $(zx)$ reversal due to some $w \in S$. Knowing that $zR^*x$ is true, this is a weak reversal, which is seen in small menus implying $xR^*w$ holds. This means $x \notin \min(S, \bar{R})$ contradicting $\bar{C}(S)=y$. 
	
	(ii) $z \succ_c x \succ_c y$, $z \succ_c y \succ_c w$, $C(xyz)=z$ and $C(yzw)=y$ for some $w \in X$. Note that again we have a weak $(zx)$ reversal due to some $k \in S$, leading to a contradiction as above.

\subsubsection{Proof of Theorem \ref{Iden1}}
\begin{itemize}
	\item[(i)] Note that lemma \ref{Rev} and proposition \ref{prop3} imply that $R^c \subset R^*$ . Also note that proof of Proposition \ref{prop3} rationalizes choice function using $R^c$ as the first rationale. This proves that $R^* \subset R^c$. Hence, $R^* = R^c$.
	\item[(ii)] As the choice procedure chooses the maximal alternative from the set of alternatives which do not belong to $\min(S,R)$, $\hat{P}_{R^c}$ is a subset of any $P$ such that $(R^c,P)$ is a representation. For all the pairs which are not related with respect to $R^c$, both the alternatives are shortlisted. Since, $x \succ_c y$, $xPy$ is true, $P^c \subseteq P^*$  
\end{itemize}
\subsubsection{Proof of Theorem \ref{Iden2}}
\begin{itemize}
	\item[(i)] We first show that $\hat{Q} \cap R =\phi$ for any $R$ such that $(R,P)$ represents choice function. Consider the first case with a strong $(xw)$ reversal on $S \ni y$ due to say, $z$. By lemma \ref{Rev} and proposition \ref{prop3}, we know that $zR^*x$, $xP^*w$ and $wP^*z$ holds. For $w$ to be chosen in $S \cup \{z\}$, it must be that $x in \min(S\cup \{z\},R)$ therefore we cannot have $xRy$. Similarly, in the second case, we have $(yw)$ strong reversal on $S \ni x$ due to say, $z$. This implies that $zR^*y$, $yP^*w$ and $wP^*z$ must hold (lemma \ref{Rev} and proposition \ref{prop3}). For $w$ to be chosen in $S \cup \{z\}$, it must be that $y \in \min(S \cup \{z\})$. Now, if $xRy$ holds, then there must exist a $a \in S$ such that $ yRa$ (implying $y \notin \min(S \cup \{z\})$) holds, which contradicts the assumption that $C(S \cup \{z\}) =w$ . In the final case, where we have weak $(wx)$ reversal and $w$ is chosen in the presence of $x,y$. We know that $yP^*w$ holds. For $w$ to be chosen in presence in the presence of $y$, $y$ should not be shortlisted, which is not possible with $xRy$ by the same argument as in the previous case. \\
	We have proved above that $R^c$ is the smallest possible $R$. $\bar{R}$ must be a subset of pairwise relation $\succ_c \setminus \hat{Q}$. As $R$ is a transitive relation, $\bar{R}$ must be the largest transitive relation which is a subset of $\succ_c \setminus \hat{Q}$ 
	
	\item[(ii)] The argument is similar to that of $P^*$ in proof of theorem above, replacing $R^c$ with $R$.
\end{itemize}
	\subsection{Proof of results in Section \ref{sec7}} \label{proofsec7}
	Before we begin the proof, we prove some intermediate results
\begin{defn}
	Negative Expansion (NE): For all $S,S' \supset \{x,y\}$,
	\begin{center}
		$C(S)= C(S') = x $ implies $C(S \cup S') \neq y$
	\end{center}
	
\end{defn}
\begin{lemma}
	\label{NE}
	If $C$ is CBR representable, then $C$ satisfies Negative Expansion
\end{lemma}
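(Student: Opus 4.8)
The plan is to argue directly from a CBR representation $(R,P)$ of $C$, with $R$ transitive and $P$ complete, exploiting observations (2) and (3) from the proof of Proposition \ref{necessity}. Suppose, toward a contradiction, that $S, S' \supseteq \{x,y\}$ (with $x \neq y$) satisfy $C(S) = C(S') = x$, yet $C(S \cup S') = y$.

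First I would show that $x$ survives the first stage in the union, i.e. $x \notin \min(S \cup S', R)$. Since $C(S) = x$ and $C(S') = x$, we have $x \notin \min(S,R)$ and $x \notin \min(S',R)$, so observation (2) gives that in each of $S$ and $S'$ the alternative $x$ is either isolated or dominates some alternative with respect to $R$. If $xRb$ for some $b \in S \cup S'$, then $x$ dominates an element of the union and hence cannot be minimal there. Otherwise $x$ is isolated in both $S$ and $S'$; since any $R$-relation involving $x$ in $S \cup S'$ would have to be with an element of $S$ or of $S'$, isolation is inherited and $x$ is isolated in $S \cup S'$, and an isolated alternative is never minimal. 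Either way $x \notin \min(S \cup S', R)$.

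Next, from $C(S \cup S') = y$ I read off two facts from the definition $C(T)=\max(T\setminus \min(T,R),P)$: namely $y \notin \min(S \cup S', R)$ (the chosen alternative is shortlisted) and $y$ is the $P$-maximal element of the shortlist. Since $x$ is also shortlisted (by the previous paragraph) and $x\neq y$, completeness of $P$ together with maximality of $y$ forces $yPx$. For the final contradiction I would derive $xPy$: by the contrapositive of observation (3) applied to the pair of sets $S, S'$, the fact that $y \notin \min(S \cup S', R)$ implies $y \notin \min(S, R)$ or $y \notin \min(S', R)$; say $y$ is shortlisted in $S$. Then, as $C(S) = x$ is the $P$-maximal shortlisted alternative of $S$ and $y$ is shortlisted there, we get $xPy$. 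This contradicts $yPx$ by asymmetry of $P$, completing the argument.

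The only delicate step is establishing $x \notin \min(S \cup S', R)$, because $\min(\cdot, R)$ is not monotone under enlarging the menu: a shortlisted alternative can in principle be ``pushed'' into the minimal set when new alternatives are added. What makes it go through is precisely that $x$ is shortlisted in $S$ \emph{and} in $S'$ simultaneously, so the reason for its survival (dominating something, or being isolated) transfers to the union — this is exactly the content packaged into observations (2) and (3). Everything else is a routine unwinding of the representation together with asymmetry of $P$, so I expect no further obstacle.
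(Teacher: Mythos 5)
Your proof is correct, but it takes a genuinely different route from the paper's. The paper establishes Lemma \ref{NE} through its reversal machinery: it splits on whether $C(xy)=x$ or $C(xy)=y$, invokes Proposition \ref{necessity} so that the offending reversal must be weak or strong, and then reads off from Lemma \ref{Rev} the relations each type of reversal forces ($yPx$ and $yRz$ for a weak reversal, $zRx$ and $xPy$ for a strong one), deriving a contradiction case by case. You instead verify the property directly from the representation: survival of the first stage is preserved under unions (if $x$ dominates something in $S$ or in $S'$ it dominates something in $S\cup S'$; if it dominates nothing and, by observation (2), is undominated in both, it is undominated in the union), so $x\notin\min(S\cup S',R)$; then $P$-maximality of $y$ in the union gives $yPx$, while the contrapositive of observation (3) places $y$ in the shortlist of $S$ or $S'$, where $P$-maximality of $x$ gives $xPy$, contradicting asymmetry. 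Your argument is more elementary and self-contained, needing only the basic observations about $\min(\cdot,R)$ and completeness of $P$ and never touching the weak/strong taxonomy; the paper's version is less direct but uniform with the rest of its analysis and reuses Lemma \ref{Rev}. The one step you flag as delicate --- the non-monotonicity of $\min(\cdot,R)$ under menu expansion --- is handled correctly: being shortlisted in both $S$ and $S'$ simultaneously is precisely what rules out $x$ being pushed into the minimal set of the union.
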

\begin{proof}
	If possible, suppose choice function $C$ violates NE. Then there exists $S, S' \supset \{x,y\}$ such that $C(S)=C(S')=x$ and $C(S \cup S')=y$. If $C(xy)=x$, then we have an $(xy)$ reversal. By Proposition \ref{necessity}, $C$ satisfies A(1)-A(4) implying each reversal is either weak or strong. Let $(R,P)$ be the representation of the choice function. A weak $(xy)$ reversal due to $z \in S$ and $z' \in S'$ implies $yPx$, $yRz$ and $yRz'$ (by lemma \ref{Rev}), therefore it must be that $y \notin \min(S,R)$. This implies $C(S) \neq x$. Also, if there is a strong $(xy)$ reversal due to some $z \in S \cup S'$ , then lemma \ref{Rev} implies $zRx$. Given $C(S)=x$, we know that $x \notin \min(S, R)$ and there is a $w \in S$ such that $xRw$ holds. As $x \notin \min(S \cup S',R)$ and by lemma \ref{Rev}, we know that  $xPy$ holds. This contradicts $C(S \cup S')= y$. \\
	Now, if $C(xy)=y$, we have an $(yx)$ double reversal. A weak $(yx)$ reversal (due to some $z \in S$ and $z' \in S'$) implies $xRz$ and $xRz'$ (by lemma \ref{Rev}). Given $xPy$ and $x \notin \min(S \cup S', R)$, we cannot have $C(S \cup S')=y$. If the $(yx)$ reversal is strong, then by lemma \ref{Rev}, $yPx$ must hold. Since $C(S)=C(S')=x$, we must have $y \in \min(S,R)$ and $y \in \min(S',R)$ (thus $y \in \min(S \cup S')$) contradicting $C(S \cup S')=y$ 
\end{proof}
\bigbreak
		\begin{lemma}\label{dbl}
	If $C$ is CBR representable, then a $(xy)$ double reversal due to $z_1,z_2$ is equivalent to a strong $(xy)$ reversal due to $z_1$ and a weak $(z_1x)$ reversal due to $z_2$
\end{lemma}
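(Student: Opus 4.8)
The plan is to work inside a fixed CBR representation $(R,P)$ and to translate both phenomena into statements about $R$ and $P$ using Lemma \ref{Rev}. Recall that, by the path argument behind WCC$^*$, an $(xy)$ double reversal due to $z_1,z_2$ is exactly an $(xy)$ reversal due to $z_1$ (the choice moving from $x$ to $y$) followed by a $(yx)$ reversal due to $z_2$ (the choice moving from $y$ back to $x$), with the menu on which the first reversal lands contained in the menu on which the second begins. I would prove the two implications separately, reading off the forced binary relations and then re-assembling them.

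For the forward direction I would first rule out that the first reversal is \textbf{weak}. If it were, Lemma \ref{Rev} would give $yPx$ and $yRz_1$; since $z_1$ and $y$ both lie in the later menu $S'$ on which $x$ is chosen, $yRz_1$ keeps $y$ out of $\min(S',R)$, so $y$ is shortlisted there, and $yPx$ then prevents $x$ from being the $P$-maximum, contradicting $C(S')=x$. Hence the first reversal is \textbf{strong}, and Lemma \ref{Rev} yields $\neg xRy$, $xPy$, $z_1Rx$, $yPz_1$; in particular $z_1\succ_c x$ and $C(z_1x)=z_1$. I would then examine the second reversal: on the menu where $y$ is chosen, $xPy$ forces $x\in\min(\cdot,R)$, and when $z_2$ is added $x$ leaves the minimal set, which (as $z_1Rx$ still holds) can only happen because $xRz_2$; moreover $z_1$ is shortlisted there, so $x$ being the $P$-maximum forces $xPz_1$. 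Now $z_1Rx$, $xRz_2$ (whence $z_1Rz_2$ by transitivity) and $xPz_1$ give $C(z_1x)=z_1$ and $C(z_1xz_2)=x$, i.e.\ a \textbf{weak} $(z_1x)$ reversal due to $z_2$, as claimed.

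For the converse I would start from the two reversals and, via Lemma \ref{Rev}, collect exactly the same configuration: $z_1Rx$, $xPy$, $yPz_1$, $\neg xRy$ from the strong $(xy)$ reversal, together with $xRz_2$ and $xPz_1$ from the weak $(z_1x)$ reversal, plus $\neg yRx$ from $C(xy)=x$ and $z_1Rz_2$ by transitivity. To exhibit the double reversal I would invoke the Small Menu Property (Lemma \ref{SMP}) and split on whether $z_1Ry$, matching its two strong sub-cases. If $\neg z_1Ry$ then $C(xyz_1)=y$, and computing the CBR choice along $\{x,y\}\subset\{x,y,z_1\}\subset\{x,y,z_1,z_2\}$ gives choices $x,\,y,\,x$ (the last because $z_2\in\min(\cdot,R)$, the shortlist is $\{x,y,z_1\}$, and $x$ beats $y$ and $z_1$ under $P$). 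If instead $z_1Ry$ then SMP supplies a witness $w$ with $yRw$ and $C(\{x,y,z_1,w\})=y$; adjoining $z_2$ pushes $z_2$ and $w$ into the minimal set, leaves the shortlist $\{x,y,z_1\}$, and again $x$ is the $P$-maximum, so $C(\{x,y,z_1,w,z_2\})=x$. Either way one obtains an $(xy)$ double reversal due to $z_1,z_2$.

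The main obstacle is the $z_1Ry$ branch of the converse: there the double reversal does not sit on the three-menu chain through $\{x,y,z_1\}$, whose middle choice is $z_1$ rather than $y$, so I must locate the Small Menu witness $w$, verify $yRw$, and above all check that when $z_2$ is adjoined no extraneous alternative enters the shortlist and that $x$ remains the $P$-maximum. Pinning down the undetermined $R$-relations of $y$ to $z_1,z_2,w$, using transitivity of $R$ together with the asymmetry and completeness of $P$, is the delicate bookkeeping to which the rest of the argument reduces.
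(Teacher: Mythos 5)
Your forward direction is essentially the paper's own argument: the paper likewise rules out a weak first reversal via Lemma \ref{Rev} (since $xRy$, $yRz_1$ and $yPx$ would keep $y$ shortlisted and $P$-dominant in every superset, blocking any return to $x$), then extracts $z_1Rx$, $xPy$, $yPz_1$ from the strong reversal and forces $xRz_2$ and $xPz_1$ from the re-emergence of $x$, landing on the weak $(z_1x)$ reversal due to $z_2$ on the menu $\{z_1,x\}$. Where you genuinely go beyond the paper is the converse: the lemma is stated as an equivalence, but the paper's proof only establishes the implication from the double reversal to the two constituent reversals and leaves the reverse implication implicit. Your reconstruction of the double reversal from the relations $z_1Rx$, $xRz_2$, $xPy$, $xPz_1$, $yPz_1$, with the case split on $z_1Ry$ and the Small Menu Property supplying the intermediate menu $\{x,y,w\}$ in the second branch, is a sound way to fill that in; the only imprecision is your claim in the first branch that the shortlist from $\{x,y,z_1,z_2\}$ is exactly $\{x,y,z_1\}$ --- if $z_2Ry$ held, $y$ could drop into the minimal set --- but since $xPz_1$ the final choice is $x$ either way, so the double reversal obtains regardless. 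In short: correct, same core mechanism, and more complete than the printed proof.
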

\begin{proof}
	Let $C$ be a CBR representable choice function. A $(xy)$ double reversal due to $z_1$, $z_2$ implies $x \succ_c y$ and $\exists \ S,S'$, $\{x,y\} \subset S' \subset S$ such that $C(S')=x,C(S' \cup z_1)=y \ C(S)=y, C(S \cup z_2)=x$ and for all $T,T', T''$, $\{x,y\} \subset T' \subset T \subset T''$, if $C(T) = C\{x,y\} = x$ and $C(T') = y$, then $C(T'') \neq y$. As $C$ satisfies WCC$^*$ and  Exclusivity, each reversal is either weak or strong. If first $(xy)$ reversal is weak, then lemma \ref{Rev} implies $xRy$, $yRz_1$ and $yPx$. As $x,y \notin \min(T,R)$ for all $T \supset S$, there can be no double reversal. Thus, first reversal is strong, implying $z_1Rx$, $xPy$ and $yPz_1$. For $x$ to be chosen again in $S \cup \{z_2\} $, it must be that $xRz_2$ and $xPz_1$ hold. This implies $z_1 \succ_c x \succ_c z_2$ and $C(xz_1 z_2)=x$ and hence a weak $(z_1 x)$ reversal due to $z_2$ 
\end{proof}
	\subsubsection{Proof of Proposition \ref{RSM}}

		Let us prove the if part. Consider a CBR representable choice function $C$ which is also RSM representable. If possible, for some $x,y$ we have a weak $(xy)$ reversal due to some $z$. By Lemma \ref{Rev}, we have $x \succ_c y \succ_c z$ and $C(xyz)=y$. However, this violates Expansion as $C(xy)=C(xz)=x$, but $C(xyz)=y$. As RSM satisfies Expansion, this is a contradiction.
		
		Now consider $C$ which is CBR representable, with no weak reversals. By proposition \ref{necessity}, if there is any reversal, it has to be strong. If possible, let $C$ violate Expansion, i.e. there exists $S, S'$ such that $C(S)=C(S')=x$, but $C(S \cup S') =y \neq x$. If $\{x,y\} \subset S \cap S'$, this violates NE leading to a contradiction (by lemma \ref{NE}). WLOG, let $y \in S \setminus S'$. If $C(xy)=y$, we have double reversal which is a contradiction by lemma \ref{dbl}. Thus, $C(xy)=x$ implying a $(xy)$ strong reversal due to some $z \in S'$. By lemma \ref{Rev}, $xPy$ and $zRx$ hold and since $C(S')=x$, there exists a $w \in S'$ such that $xRw$ is true. Note that this implies $x \notin \min(S \cup S',R)$ which implies $C(S \cup S') \neq y$. Now, if $C$ violates WWARP, given it satisfies R-WARP, there is a double reversal. But, that is equivalent to a strong and a weak reversal which is a contradiction. Thus, $C$ satisfies Expansion and WWARP, implying RSM representation.

	\subsubsection{Proof of Proposition \ref{TSM}}

		Argument is analogous to that of proposition \ref{RSM}

	\subsection{Independence of axioms}
	By means of simple examples, we demonstrate that the characterization is tight. \\
	
	Example 1. The choice function below satisfies NBC$^*$, WCC* and R-WARP but violates NC: $X = \{x,y,z\}$
{\footnotesize	
	\begin{center}
		\begin{tabular}{|c|c|c|c|} 
			
			\hline 
			
			{\textbf{S}} & {\textbf{C(S)}} & {\textbf{S}} & {\textbf{C(S)}} \\
			
			\hline 
			
			$\{x,y\}$ & $y$ & $\{x,y,z\}$ & $x$ \\
			$\{x,z\}$ & $z$  & &\\
			$\{y,z\}$ & $y$ & &\\
			\hline 
		\end{tabular}
	\end{center} 
}	
	
	\vspace{.5cm}
	
	Example 2. The choice function below satisfies NBC$^*$,  NC and R-WARP but violates WCC*: $X = \{x,y,z, w\}$ {\footnotesize
	\begin{center}
		\begin{tabular}{|c|c|c|c|c|c|} 
			
			\hline 
			
			{\textbf{S}} & {\textbf{C(S)}} & {\textbf{S}} & {\textbf{C(S)}} & {\textbf{S}} & {\textbf{C(S)}} \\
			
			\hline 
			$\{x,y\}$ & $x$ & $\{x,y,z\}$ & $z$ & $\{x,y,z,w\}$ & y \\
			$\{x,z\}$ & $z$ & $\{x,y,w\}$ & $w$ &  & \\
			$\{x,w\}$ & $w$ &  $\{x,z,w\}$ & $w$ &  &\\
			$\{y,z\}$ & $z$ & $\{y,z,w\}$ & $y$ & & \\
			$\{y,w\}$ & $y$ &  &  & & \\
			$\{z,w\}$ & $w$ &  &  & & \\
			\hline 
		\end{tabular}
		
	\end{center} 
}
	\vspace{.5cm}

	Example 3. The choice function below satisfies NC, WCC* and R-WARP but violates NBC$^*$: $X = \{x,y,z, w\}$ {\footnotesize
	\begin{center} 
		\begin{tabular}{|c|c|c|c|c|c|} 
			\hline	
			{\textbf{S}} & {\textbf{C(S)}} & {\textbf{S}} & {\textbf{C(S)}} & {\textbf{S}} & {\textbf{C(S)}} \\
			\hline 
			$\{x,y\}$ & $x$ & $\{x,y,z\}$ & $y$ & $\{x,y,z,w\}$ & $y$ \\
			$\{x,z\}$ & $z$ & $\{x,y,w\}$ & $y$ &  & \\
			$\{x,w\}$ & $x$ &  $\{x,z,w\}$ & $x$ &  &\\
			$\{y,z\}$ & $y$ & $\{y,z,w\}$ & $z$ & & \\
			$\{y,w\}$ & $y$ &  &  & & \\
			$\{z,w\}$ & $z$ &  &  & & \\
			\hline 
		\end{tabular}
		
	\end{center}
}
	\vspace{.5cm}

	Example 4. The choice function below satisfies NC, WCC* and NBC$^*$ but violates R-WARP: $X = \{x,y,z, w\}$ 
	{\footnotesize
	\begin{center} 
	\begin{tabular}{|c|c|c|c|c|c|} 
		\hline 	
		{\textbf{S}} & {\textbf{C(S)}} & {\textbf{S}} & {\textbf{C(S)}} & {\textbf{S}} & {\textbf{C(S)}} \\
		\hline 
		$\{x,y\}$ & $x$ & $\{x,y,z\}$ & $y$ & $\{x,y,z,w\}$ & $x$ \\
		$\{x,z\}$ & $x$ & $\{x,y,w\}$ & $x$ &  & \\
		$\{x,w\}$ & $x$ &  $\{x,z,w\}$ & $x$ &  &\\
		$\{y,z\}$ & $y$ & $\{y,z,w\}$ & $y$ & & \\
		$\{y,w\}$ & $y$ &  &  & & \\
		$\{z,w\}$ & $z$ &  &  & & \\
		\hline 
	\end{tabular}
\end{center}
	}
	
	\noindent

	\newpage

	\setlength{\bibsep}{0.2 cm} 
	\bibliographystyle{ecta} 
	\bibliography{bibliog}

\end{document}